\newcommand{\clw}{{\cal W}}
\definecolor{webgreen}{rgb}{0,0.4,0}
\definecolor{webbrown}{rgb}{0.6,0,0}
\definecolor{purple}{rgb}{0.5,0,0.25}
\definecolor{darkblue}{rgb}{0,0,0.7}
\definecolor{darkred}{rgb}{0.7,0,0}
\definecolor{darkgreen}{rgb}{0,0.7,0}
\newcommand{\ignore}[1]{}
\newtheorem{prop}{{\sc Proposition}}
\newtheorem{theorem}{{\sc Theorem}}
\newtheorem{defn}{{\sc Definition}}
\newtheorem{obs}{{\sc Observation}}
\newtheorem{example}{{\sc Example}}
\newtheorem{assumption}{{\sc Assumption}}
\newenvironment{proof}{\noindent {\bf \sl Proof\/}:\enspace}
{\hfill $\blacksquare{}$ \vspace{12pt}}
\begin{document}
\title{{\sc Incomplete Information and Matching of Likes: A Mechanism Design Approach}\thanks{
We are grateful to Ivan Balbuzanov, Lars Ehlers, Deniz Dizdar, Nicolas
Klein, Shih En Lu, Debasis Mishra, Herv\'{e} Moulin, Szilvia P\'{a}pai, Umutcan Salman,
Christopher Sandmann, Arunava Sen, as well as various seminar and conference
audiences for helpful comments.}}
\author{Dinko Dimitrov\thanks{Saarland University, Germany; email: dinko.dimitrov@mx.uni-saarland.de}\hspace{0.17cm} and Dipjyoti Majumdar\thanks{Concordia University, Canada; email: dipjyoti.majumdar@concordia.ca}}
\date{\today}
\maketitle

\begin{abstract}
We study the implementability of stable matchings in a two-sided market model with one-sided incomplete information. Firms' types are publicly known, whereas workers' types are private information. A mechanism generates a matching and additional announcements to the firms at each report profile of workers' types. When agents' preferences are increasing in the types of their matched partner, we show that the assortative matching mechanism which publicly announces the entire set of reported types is incentive compatible. Furthermore, any mechanism that limits information disclosure to firms’ lower contour sets of reported types remains incentive compatible. However, when information is incomplete on both sides of the market, assortative matching is no longer implementable.

 \vskip 0.2cm
 \noindent
 \textit{JEL Classification:} C78, D40, D82, D83 \\
 \textit{Keywords:} assortative matching, incentive compatibility,
incomplete information, nontrivial updating, stable matching

\end{abstract}

\section{Introduction}

Since the seminal paper by \citet{GS}, there has been an extensive literature on two-sided matching that examines markets characterized by heterogeneity on both sides. In the Gale-Shapley model, agents on each side of the market are endowed with exogenous preferences over agents on the opposite side, with stability being the central concept. A matching is said to be stable if there exists no pair of agents - one from each side of the market - who would both prefer to be matched with each other rather than with their assigned partners. Much of the subsequent literature adopts the assumption that agents’ characteristics are publicly known (see, e.g., \cite{Roth2}).

However, it is natural to expect that at least some agents’ characteristics are private information. For example, a woman in a marriage market may not have precise information about a man’s attributes, and a hospital may learn the quality of a doctor only after hiring him. Similarly, a firm can observe a worker’s productivity only after the employment relationship has begun.

In the present paper, we adopt a mechanism design approach to study the
implementability of stable matchings in a two-sided market model with one-sided private information. Specifically, we consider a framework
in which firms are matched with workers. Each agent is characterized by a one-dimensional quality parameter (type). Firms' types are commonly known, and agents' preferences are increasing in the types of their matched partner. Workers' types are private information and a firm observes a worker's type only after being matched with her. Our setup is similar to the one in the influential paper by \cite{Liu2}, but differs in that we abstract from monetary transfers. Under complete information, the unique stable matching in this environment is the (\textit{positively}) \textit{assortative matching}, in which agents of higher types on each side of the market are matched with one another. With incomplete information, however, and depending on the notion of stability adopted, many more outcomes can be stable. We therefore ask whether there exist mechanisms that truthfully implement the complete information stable matching for every possible realization of workers' types, given that firms' types are commonly known. Addressing this question requires, first, defining an appropriate notion of stability under incomplete information and, second, designing mechanisms that induce workers to report their private information truthfully in equilibrium.

The first issue pertains to the formulation of an appropriate notion of blocking. In our setting, a \textit{matching state} consists of a matching allocation and an information structure (containing an assignment of types and a profile of information sets). A matching state is blocked by a firm
and a worker at a given type assignment if (i) the worker prefers the firm
to her current match and (ii) the firm \textit{perceives} the worker to be of a higher type than its currently matched partner. To clarify the second point, a firm can form a blocking pair with a worker only if, for \textit{all} type assignments that the firm considers possible, the worker’s type exceeds that of the firm’s current match. We assume that all matchings are individually rational; consequently, a matching state is stable if the matching allocation admits no blocking pair relative to the underlying information structure.

The second issue concerns strategic behavior by workers. When a worker potentially misreports her type, it is necessary to specify the assumptions she makes about the possible strategic behavior of other workers. We consider a setting in which, at the time of reporting, each worker assumes that all other workers report their types truthfully; this assumption is common knowledge. Firms, however, allow for the possibility that more than one worker may have misreported. A worker chooses to misreport her type if two conditions are satisfied: (i) she benefits from the misreport, and (ii) the resulting matching outcome is not blocked, according to the notion of blocking defined in the previous paragraph.

In our model, each firm observes the realized type of the worker to whom it is matched. Nevertheless, as shown in Example 1 below, observing the matched partner’s realized type alone is insufficient to deter misreporting, and mechanisms that implement positive assortative matching may fail to exist. To prevent profitable misreporting, firms must be able to identify potential misreports by their matched workers. Moreover, such identification must induce \textit{nontrivial belief updating} by firms regarding the distribution of types among the remaining workers.

In general, the existence of such mechanisms depends critically on what is observable. In the absence of transfers, the mechanism must include a device that conveys to firms whether some workers may have misreported their types. In our model, each worker submits a report (a type) to the mechanism designer. We propose a mechanism that, for every profile of reports, announces the positively assortative matching with respect to the reported types; if multiple workers submit identical reports, ties are broken according to a pre-specified order. In addition, the mechanism publicly announces the \textit{entire} set of reported types to all firms. This disclosure allows a firm to detect a potential misreport by its matched worker. However, as we demonstrate in Section 4, the \textit{mere} detection of a possible misreport does not ensure that the resulting matching state is blocked. To initiate a block, a firm must be certain - given its information set - that the worker involved in the potential blocking pair has a type strictly higher than that of its current match. More precisely, upon detecting a misreport by its matched worker, a firm must be able to infer the \textit{position} of the worker’s true type relative to the profile of reported types. Because workers only report types that are weakly higher than their true types, the report profile enables a firm to identify workers whose true types must be lower than the type of its matched partner, thereby ruling out certain type assignments. However, unlike \cite{Liu2}, this inference does not generate a recursive elimination of possible type assignments. Instead, we employ an induction argument based on the \textit{position} of a worker’s type within a given type assignment. Specifically, we first show that a worker with assigned type $t$ cannot successfully misreport - that is, misreport and avoid having the resulting matching blocked - at any type assignment in which $t$ is ranked second. This constitutes Step 1 in the proof of our main result, Theorem \ref{Theorem1}. We then establish that a worker of type $t$ cannot successfully misreport at any type assignment in which her type is ranked third, taking Step 1 as given. Since the number of positions is finite, this inductive process terminates, leading us to conclude that the proposed mechanism implements the complete information stable matching allocation - namely, the positively assortative matching.

A natural question arising from the definition of the above mechanism concerns the structure of the additional announcements. In particular, we ask whether an assortative matching mechanism that conveys \textit{less information} to firms can remain non-manipulable. We investigate this issue in Section 4, where it is shown to be closely related to the following observation. Consider a firm $i$, and suppose it has some information about the reported types of workers matched to firms with types higher than that of firm $i$. Such information is immaterial for firm $i$'s search for potential blocking partners, due to the assortative nature of the induced matching. Specifically, any worker $j$ whose reported type exceeds the reported type of firm $i$'s matched partner must be matched with a firm of higher type than $i$. Since firms’ types are publicly known, worker $j$ would not be willing to form a blocking pair with firm $i$. Put differently, and drawing on the insights from the proof of Theorem \ref{Theorem1}, it is sufficient for each firm to observe the entire \textit{lower contour set} of reported types. More precisely, when a firm detects a misreport by its matched worker, it only needs to know the \textit{position} of the worker’s true type relative to the sub-profile of reports in its strict lower contour set. By contrast, if a mechanism withholds information about the reports in firms’ strict lower contour sets, implementation of the complete information stable matching allocation may fail. Finally, in Section 5 we show that when there is private information on both sides of the market, even the assortative matching mechanism that reveals the entire set of reported types to each firm is not incentive compatible.

\subsection{Two Examples}

To motivate our results and demonstrate how the proposed mechanisms operate, we consider two illustrative examples. In both examples, the mechanism generates the assortative matching with respect to the workers’ reported types; moreover, this fact is common knowledge and the resulting matching is publicly observable. In Example 1, no additional announcements are made. By contrast, in Example 2, the mechanism additionally announces to each firm the entire set of reported types.

Throughout, we denote a generic type assignment by the vector $%
(w(j))_{j\in J}$ where $J$ is the set of workers and each $w(j)$ takes
values in a finite set $T$. The set $T$ is endowed with a complete linear
order $>$ which represents the common preference ordering over workers'
types. In each example, we fix a particular worker after the realization of
her type and examine, from her perspective, whether it is profitable to misreport her type to the corresponding mechanism. This evaluation is conducted at a \textit{given realized workers' type assignment}.

In Example 1, we show that the worker can \textit{successfully} misreport, in the sense that the matching outcome resulting from the misreport is stable. This establishes that the corresponding mechanism is \textit{not} incentive compatible. By contrast, the mechanism employed in Example 2 allows us to conclude that the same misreport by the worker, at the same type assignment, is unsuccessful, since the matching outcome induced by the misreport is blocked.

\begin{example}\label{1}
{\rm
Let $J=\{j_{1},j_{2},j_{3}\}$ be the set of
workers and $I=\{i_{1},i_{2},i_{3}\}$ be the set of firms. Firms' types are
publicly known with $i_{1}$ being of the highest type, $i_{2}$ of the second
highest, and $i_{3}$ of the lowest type. Consider a particular realization
of workers' types $w=(w(j_{1}),w(j_{2}),w(j_{3}))=\left(
t_{1},t_{2},t_{3}\right) $ with $t_{1}>t_{2}>t_{3}$, as well as a specific
misreport by $j_{2}$ at $w$ via some $t^{\prime }>t_{1}$. The resulting
assortative matching would be $\mu
^{A}=\{(i_{1},j_{2}),(i_{2},j_{1}),(i_{3},j_{3})\}$ and notice that, by
firms' types being publicly known, neither $j_{2}$ nor $i_{3}$ can belong to
a blocking pair. Given that there is no information other than the matching $%
\mu ^{A}$ that is observed, firm $i_{1}$ only gets to know $\mu ^{A}$ and
the realized type $w(j_{2})=t_{2}$ of worker $j_{2}$. Thus, firm $i_{1}$
cannot rule out as impossible a type assignment $w^{\prime }$ where $%
w^{\prime }(j_{2})=w(j_{2})>w^{\prime }(j)$ for all $j\neq j_{2}$. In words,
the set of type assignments that $i_{1}$ considers as possible (firm $i_{1}$'s information set), given $w(j_{2})$, includes a type assignment where $j_{2}$ has the highest assigned type. Consequently, $i_{1}$ cannot be a part of any blocking. A similar argument implies that firm $i_{2}$ cannot be a part of any blocking, either. Since $j_{2}$ can make this reasoning as well as firm $%
i_{1}$, worker $j_{2}$ can successfully misreport at the type assignment $w$%
. Thus, the mechanism is not incentive compatible.
}
\end{example}

\begin{example}\label{2}
{\rm
Consider the same sets of workers and firms as
in Example 1, as well as the same publicly known firms' types. The
particular realization of workers' types is again $w=\left(
t_{1},t_{2},t_{3}\right) $ with $t_{1}>t_{2}>t_{3}$ and worker $j_{2}$
misreporting at $w$ via $t^{\prime }>t_{1}$. The problem of incentives in
Example 1\ emanates from the fact that, based on what firm $i_{1}$ knows
after the matching, there is no refinement of $i_{1}$'s information set,
i.e., there is no updating of beliefs for firm $i_{1}$. We reiterate here,
in our model, information sets represent the beliefs of the firms. In the
present example, we introduce an additional source of information for the
agents in the form of an announcement by the mechanism of the set of
reported types to each firm.\footnote{%
Recall that it is common knowledge that the mechanism generates the positively assortative matching with respect to the reported types. Hence, the additional announcement of the \textit{entire} set of reported types makes the reported type \textit{profile} known to all firms.} We make the assumption that a firm $i$, after detecting a misreport
(that is, a mismatch between the report made by its partner $j$ and $j$'s
realized type $t$), updates its information set to include only type
assignments where there exists some $j^{\prime }\neq j$ whose assigned type
is higher than $t$ (\textit{nontrivial updating}). Consider now the report
profile $r=\left( r(j_{1}),r(j_{2}),r(j_{3})\right) =(t_{1},t^{\prime
},t_{3})$ and observe that, under the proposed mechanism, firm $i_{1}$ will
get to know the entire set of reports and thus conclude that $t^{\prime }$
was reported by $j_{2}$, $t_{1}$ by $j_{1}$, and $t_{3}$\ by $j_{3}$. Since
workers report only upwards to the mechanism and given that $i_{1}$ observes
the true type ($t_{2}$) of its matched worker ($j_{2}$), we have from $%
r(j_{3})=t_{3}<t_{2}$ that firm $i_{1}$ would only consider those type
assignments $w^{\prime }$ as possible where $w^{\prime
}(j_{3})<t_{2}=w(j_{2})$. Denoting firm $i_{1}$'s information set as $\Pi
_{i_{1}}$, we then have%
\begin{equation*}
\Pi _{i_{1}}=\{w^{\prime }\mid w^{\prime }(j_{3})<t_{2}=w(j_{2})\}.
\end{equation*}%
So, the only issue is whether (i) $w^{\prime }(j_{1})<t_{2}=w(j_{2})$ or
(ii) $w^{\prime }(j_{1})>t_{2}=w(j_{2})$. Notice that $i_{1}$ can conclude
that its matched worker $j_{2}$ has misreported to the mechanism due to $%
t^{\prime }>t_{2}$ with both $t^{\prime }$ and $t_{2}$ being observed by $%
i_{1}$. Then, under nontrivial updating, there must exist at least one
worker $j$ whose assigned type is higher than $w(j_{2})=t_{2}$. Therefore, it
must be the case that (ii) holds, i.e., for all $w^{\prime }\in \Pi _{i_{1}}$%
, $w^{\prime }(j_{1})>t_{2}=w(j_{2})$. Notice finally that firm $i_{1}$ is
of the highest possible type and this is publicly known. Thus, when the
profile of reports is $r=(t_{1},t^{\prime },t_{3})$, the pair $(i_{1},j_{1})$
would block the resulting assortative matching. Our incentive requirement is
that each worker, at the time of sending her report, assumes every other
worker is reporting truthfully. So, worker $j_{2}$ misreporting at type
assignment $w$ via $t^{\prime }$ would think that workers $j_{1}$ and $j_{2}$
would report $r(j_{1})=t_{1}$ and $r(j_{3})=t_{3}$, respectively. Given that
the resulting matching state is not stable, worker $j_{2}$ (who can make the
above reasoning as well as firm $i_{1}$) cannot successfully misreport at $w$
via $t^{\prime }$. The argument presented above applies to the type assignment $w$ in which the type of $j_{2}$ is \textquotedblleft second-ranked\textquotedblright. Our main result extends this reasoning to all type assignments and to any rank that a worker’s assigned type may have.
}
\end{example}

\subsection{Related Literature}

The recent literature on stable matching with incomplete information was initiated by the pioneering work of \citet{Liu2}. In that paper, the authors analyze how firms form beliefs (or, in our terminology, information sets) about the possible distributions of worker types, conditional on a matching being not blocked. This rules out certain distributions of worker types from a firm's information set given that the remaining matching is stable. In their model, the critical notion is a matching outcome - that is, a matching and a type distribution. The inferences that a firm may draw regarding possible type distributions may lead to further inferences via a procedure of iterated elimination of blocked matching outcomes, similar in spirit to the concept of rationalizability (\citet{Bernheim}, \citet{Pearce}). The set of incomplete information stable matching outcomes in the paper by \citet{Liu2} is generally a superset of the set of complete information stable outcomes. Framed in our context, the complete information stable outcome would be the complete information assortative matching. Among many other results, the paper by \cite{Liu2} also provides conditions under which incomplete information stability coincides with complete information stability (Proposition 6 in \cite{Liu2}). However, a critical requirement in that paper is the presence of monetary transfers, without which the set of
incomplete information stable outcomes is much larger. The notion of
stability in \cite{Liu2} is \textquotedblleft belief free\textquotedblright , as in our setup. This is in contrast to the one-sided incomplete information model with non-transferable utility in \citet{Bik1}, where the focus is on the existence of Bayesian stable matching outcomes which are not blocked with respect to a prior belief. \citet{Bik1} also presents a centralized matching mechanism and studies its ex-post incentive compatibility. Although similar in spirit, the format of the mechanism and the analysis there are different from ours. More recently, \citet{Chen1} extend the model in \citet{Liu2} to consider incomplete information on both sides of the market. In a model with two-sided uncertainty and no transfers, \citet{Laz} study a different notion of blocking: a pair of agents will block a matching if there is a positive probability of each agent doing better. Compared to the papers mentioned above, our notion of stability under incomplete information is more permissive and akin to the notion of \textit{naive} blocking in \citet{Chen1}. In fact, virtually any matching allocation is stable in our setup. However, when suitably incorporated into the definition of incentive compatibility, our stability notion allows for a detailed and precise study of the existence of mechanisms implementing the complete information stable matching allocation.

With interdependent preferences and incomplete information on one side of
the market, \citet{chak} study stable matching mechanisms that
elicit information truthfully and are immune to rematching by some
participants, based on updated posterior beliefs. Their analysis focuses on student-college matching, where students have complete information about colleges, but colleges receive noisy signals about students. Importantly, a college’s estimate of a student’s quality depends on the signals received by other colleges. \citet{chak} show that when the entire matching is publicly observed, stable mechanisms generally do not exist. In contrast, we incorporate stability directly into our notion of incentive compatibility and demonstrate that firms’ observations of the reports within their corresponding lower contour sets are crucial for the existence of incentive compatible mechanisms. When a mechanism hides the reports from these sets, existence is not guaranteed. In this sense, our setting and analysis are largely orthogonal to the cited work. In a related but very different two-sided matching model with interdependent preferences, \cite{Sen1} show that ex-post stability and ex-post incentive compatibility of matching rules are not mutually consistent in their setting.

Earlier work on stability in two-sided matching markets with incomplete information has largely focused on the standard Gale-Shapley setting (\cite{GS}), where agents’ preferences are exogenous. In this framework, agents are uncertain about the preferences of others but are fully informed about the quality of each potential match. \cite{Roth1} shows that no stable matching mechanism exists in which truthfully revealing preferences is a dominant strategy for all agents. Building on this, \citet{Ehlers} consider a model with two-sided incomplete information and demonstrate that an ordinal Bayesian incentive compatible mechanism exists if and only if there is exactly one stable matching in every state of the world. Similar negative results are reported in \cite{Maj1}. In a one-to-one matching model with transfers, \cite{Yenmez} studies the compatibility of incentive constraints with the existence of stable, efficient, and budget-balanced mechanisms.

The remainder of the paper is organized as follows. Section 2 introduces the basic model and necessary preliminaries. Section 3 presents the proposed incentive compatible assortative matching mechanism, while Section 4 studies mechanisms that convey less information to firms. Section 5 concludes by examining the case of two-sided incomplete information. The Appendix contains the proofs of the main results.

\section{Preliminaries}

We consider a model where there are two disjoint sets of agents $%
I=\{i_{1},\cdots ,i_{n}\}$ and $J=\{j_{1},\cdots ,j_{n}\}$. The two sets
have the same number $n$ of agents, $n<\infty $. For the ease of
exposition, we will refer to the members of the set $I$ as firms and to the
members of the set $J$ as workers.

A \textit{matching} $\mu :I\cup J\rightarrow I\cup J$ is a bijection such
that (i) for any $i\in I$, $\mu (i)\in J\cup \{i\}$, (ii) for any $j\in J$, $\mu
(j)\in I\cup \{j\}$, and (iii) for any $i\in I$, $\mu (i)=j\Leftrightarrow \mu
(j)=i$. The interpretation of $\mu (\ell )=\ell $ for $\ell \in I\cup J$ is
that agent $\ell $ is unmatched. We denote by $\mathcal{M}$ the set of all
matchings.

\vspace{0.3cm}
\noindent
\textsc{Types and Preferences:} In the standard Gale-Shapley
model each agent has a strict preference ordering over the agents on the
other side of the market (and the possibility of remaining unmatched). In
order to incorporate incomplete information in the model, following the
seminal paper by \citet{Liu2}, we let each agent's productivity
parameter be described by the agent's \textit{type}. Denoting by $%
T\subset \Re _{+}$ the finite set of possible workers' types, we write $%
T=\{t_{1},\cdots ,t_{L}\}$ for some integer $L>n$. Without loss of
generality, we assume $t_{1}>t_{2}>\cdots >t_{L}>0$.

In a similar fashion, we denote by $S\subset \Re _{+}$ the finite set of
possible firms' types with $f:I\rightarrow S$ mapping each firm to its type.
Since firms' types are \textit{commonly} \textit{known}, we assume that
these are collected in the set $S=\{s_{1},\cdots ,s_{n}\}$ with $%
s_{1}>s_{2}>\cdots >s_{n}>0$ and $f(i_{\ell })=s_{\ell }$ holding for all $%
\ell \in \{1,\cdots ,n\}$. Each worker $j\in J$ has a strict preference $%
\succ _{j}$ over the set of types of her potential matches - that is, the
set $S$. However, a worker may end up remaining unmatched. As a convention,
we say that if a worker is unmatched, she is matched to a firm of type $0$. We assume further that $\succ_j$
defined over $S\cup \{0\}$ has the following features:

\begin{itemize}
  \item for all $\ell ,k\in \left\{ 1,\cdots ,n\right\} $, $\succ _{j_{\ell
}}=\succ _{j_{k}}$,
  \item for all $\ell \in \{1,\cdots ,n\}$ and all $k,k^{\prime }\in
\{1,\cdots ,n\}$, $s_{k}\succ _{j_{\ell }}s_{k^{\prime }}\Leftrightarrow
s_{k}>s_{k^{\prime }}$.
\end{itemize}

The first condition says that the preference over types for all workers is
the same and the second illustrates the rather standard assumption (see, e.g., \citet{Bik1}) that preferences are increasing in types. In addition, $s_{1}>s_{2}>\cdots
>s_{n}>0$ implies that each worker prefers to be matched to some firm rather than remain unmatched.

Each firm $i\in I$ has a strict preference ordering $\succ_i$ over the set of possible worker types $T$.  Again, as a convention, if a firm remains unmatched, we assume that the firm is matched to a worker of type $0$. Thus, the preference ordering $\succ_i$ of firm $i\in I$ is formally defined over $T\cup \{0\}$ and has the following features:
\begin{itemize}
  \item for all $\ell, k\in \{1,\cdots,n\}$, $\succ_{i_{\ell}}=\succ_{i_k}$,
  \item for all $\ell\in \{1,\cdots, n\}$ and for all $t, v\in T$, $t\succ_{i_{\ell}}v\Leftrightarrow t>v$.
\end{itemize}

As before, these two conditions respectively state that all firms have
identical preferences over workers' types and that these preferences are
increasing in types. In addition, $t_{1}>t_{2}>\cdots >t_{L}>0$ implies that
each firm prefers to be matched to some worker rather than remain unmatched.

The types of the workers are \textit{private information}. We shall call a function $w:J\rightarrow T$ mapping a worker to her type a \textit{type assignment} function, and let ${\clw}=\{w:J \rightarrow T \,\mid \forall j\neq j'\in J,\,\, w(j)\neq w(j')\}\subset T^n$ be the set of all type assignment maps. Note the assumption that no two workers are assigned the same type.

\vspace{ 0.3 cm}
\noindent
\textsc{Information Structures:} The incomplete information
scenario we consider is the following: the type $f(i)$ of each firm $i\in I$
is commonly known; each worker $j\in J$ knows her own type, but the specific
type assignment $w$ is not known to her; if firm $i$ and worker $j$ are
matched, firm $i$ perfectly observes the type of worker $j$ - its matched
partner. Therefore, a type assignment $w\in \mathcal{W}$ and a matching $\mu
\in \mathcal{M}$ induce an \textit{information set} for each agent $k\in I\cup J$. In addition to the matching $\mu$ that is publicly observed, there may be some other \textit{payoff relevant signal}. The finite set of such signals is denoted by $\Lambda$.

Consider now a matching $\mu \in \mathcal{M}$ and a type assignment $w\in
\mathcal{W}$. For a firm $i\in I$, we denote by $\Pi _{i}(w,\mu )\subset
\mathcal{W}$ the subset of type assignments that are consistent with $w$ and
$\mu $; that is, consistent with the fact that firm $i$ observes the
realized type of its matched worker, i.e.,
\[
\Pi_i(w,\mu)=\{w'\in \clw\,\mid w'(\mu(i))=w(\mu(i))\}\subset \clw.
\]

 Likewise, given $w\in \clw$ and $\mu\in \mathcal{M}$, for any worker $j\in J$ we denote by $\Pi_j(w,\mu)\subset \clw$ the subset of type assignments that are consistent with $w$ and $\mu$. In other words, this is the subset of type assignments that worker $j$ perceives as possible once she observes $w(j)$ and $\mu$, i.e., $\Pi_j(w,\mu)=\{w'\in \clw\,\mid w'(j)=w(j)\}\subset \clw$.

 Consider now any agent $k\in I\cup J$. Suppose that agent $k$ observes some additional payoff relevant signal $\lambda_k\in \Lambda$. Such a signal potentially refines $k$'s information set. That is, for any $k\in I\cup J$, type assignment $w$, matching $\mu$, and signal $\lambda_k$,
\begin{equation}\label{eq1}
\Pi_k(w,\mu,\lambda_k) \subseteq \Pi_k(w,\mu),
\end{equation}
 where $\Pi_k(w,\mu,\lambda_k)$ denotes the refinement of agent $k$'s information set. The non-strict subset relation in \eqref{eq1} is due to the fact that we allow for signals to be completely uninformative.

 For any agent  $k\in I\cup J$, type assignment $w\in \clw$, matching $\mu\in \mathcal{M}$, and any vector of payoff relevant signals $\lambda=(\lambda_k)_{k\in I\cup J}\in \Lambda^{2n}$, we refer to $\Pi_k(w,\mu,\lambda)$ as agent $k$'s information set corresponding to $(w,\mu,\lambda)$, and to the collection ${\Pi}=\{\Pi_k()\}_{k\in I\cup J}$ as the \textit{information structure}. Throughout, we assume that the information structure ${\Pi}$ is \textit{commonly known}.

 \vskip 0.3cm
 \noindent
 \textsc{Matching States:}  Given a vector of signals $\lambda\in \Lambda^{2n}$, a market is described by a matching allocation $\mu $, type assignment $w$,
 and the collection of information sets $\Pi (w,\mu ,\lambda )$. Formally, for every $w\in
\mathcal{W}$, $\mu \in \mathcal{M}$, and $\lambda \in \Lambda ^{2n}$, a
\textit{matching state} is the collection $\left( w,\mu ,\Pi (w,\mu ,\lambda
)\right) $.

 \vskip 0.3cm
 \noindent
 \textsc{Individual Rationality:} A matching state is \textit{%
individually rational}, if the payoff for each agent from the matching is at
least as high as the outside option of remaining unmatched. Since firms'
types are commonly known and firms do observe the type of the worker they
are matched with, in the present model there is no difference in individual
rationality between complete and incomplete information settings.

\begin{defn}\label{dfIR} A matching state $\left( w,\mu ,\Pi
(w,\mu ,\lambda )\right)$ is \textbf{individually rational}, if
\begin{itemize}
\item for each firm $i\in I$, $w(\mu (i))>0$, and
\item for each worker $j\in J$, $f(\mu (j))>0$.
\end{itemize}
\end{defn}

Given the preferences over types in our model, for all $\mu \in \mathcal{M}$%
, $w\in \mathcal{W}$, and $\lambda \in \Lambda ^{2n}$, the matching state $%
\left( w,\mu ,\Pi (w,\mu ,\lambda )\right) $ is individually rational.

We now introduce the notion of stable matching. In a complete
information environment, this notion is well established (\cite{GS}). With incomplete information, we are concerned with \textit{stable matching states} and emphasize once again what each agent knows: the types of all the firms, the entire matching, that is which
worker is matched to which firm, and the type of its/her current matched
partner. Moreover, there may be some additional payoff relevant signals. This leads us to the following \textit{naive} notion of blocking.

\begin{defn}\label{dfBLOCKING}
A matching state $\left( w,\mu ,\Pi
(w,\mu ,\lambda )\right) $ is \textbf{blocked} by a
firm-worker pair $(i,j)$, $i\in I$ and $j\in J$,
if
\begin{itemize}
  \item $s_{i}>s_{\mu (j)}$, and
  \item for all $w^{\prime }\in \Pi _{i}(w,\mu ,\lambda _{i})$, $
w^{\prime }(j)>w^{\prime }(\mu (i))$.
\end{itemize}
\end{defn}

Thus, a worker $j$ will agree to form a blocking pair with firm $i$, if the
type of firm $i$ is higher than the type of the firm the worker is matched
to under $\mu $. A firm $i\in I$ will enter into a blocking with a worker $j$%
, if firm $i$ is certain that $j$ has an assigned type that is higher than
the type of the worker it is presently matched to. This idea of blocking is
similar to the corresponding ideas in \cite{Liu2} and \cite{Bik1}.  However, unlike those papers,
there is no iterated elimination of blocked matching states in our definition of stability.

\begin{defn}\label{dfSTABILITY}
A matching state $\big(w,\mu,\Pi(w,\mu,\lambda)\big)$  is \textbf{stable} if it is not blocked by any pair $(i,j)\in I\times J$.
\end{defn}

It is immediate that our notion of stability is permissive. Since types are assumed to be individually rational for the corresponding agents, individual rationality is implicitly incorporated into the above definition. Moreover, a matching state is unlikely to be blocked simply because the second condition in Definition \ref{dfBLOCKING} must hold for all type assignments compatible with the information available to a firm at the \textit{realized} type assignment. Below, we formalize the permissiveness of this concept.

Note that our notion of stability is defined in terms of matching states and therefore depends critically on the coarseness of firms’ information sets. One particular class of matching states that is of interest is the following.

\begin{defn}\label{MI}
A matching state $\left( w,\mu ,\Pi
(w,\mu ,\lambda )\right) $ is \textbf{minimally informative} if for
each firm $i\in I$,
\[
\Pi _{i}(w,\mu ,\lambda _{i})=\left\{ w^{\prime }\in \mathcal{W}\mid
w^{\prime }\left( \mu \left( i\right) \right) =w\left( \mu \left( i\right)
\right) \right\}.
\]
\end{defn}

In other words, in a minimally informative matching state, each firm has very limited knowledge about the possible type assignments of workers, knowing only the type of its matched partner. Importantly, the definition of minimal informativeness does not impose any restrictions on workers’ information sets.

We demonstrate below that there can be many matching states that are
minimally informative and stable. Recall that $i_{n}$ is the firm of the
lowest type, $f(i_{n})=s_{n}$, and that $t_{L}=\min \{T\}$ is the lowest
possible type of a worker.

\begin{prop}\label{prop1}
A minimally informative matching state $\left( w,\mu ,\Pi (w,\mu ,\lambda )\right) $ is stable if
and only if $w(j)=t_{L}$ for some $j\in J$ implies $\mu(i_{n})=j$.
\end{prop}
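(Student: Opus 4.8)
The plan is to prove both directions of the biconditional by reasoning directly about when the single relevant blocking condition in Definition \ref{dfBLOCKING} can be satisfied in a minimally informative matching state. The key observation driving everything is that in a minimally informative state, firm $i$'s information set is the maximal set $\{w' \in \mathcal{W} \mid w'(\mu(i)) = w(\mu(i))\}$, so the second condition for blocking --- namely $w'(j) > w'(\mu(i))$ for \emph{all} $w' \in \Pi_i(w,\mu,\lambda_i)$ --- is extremely demanding. First I would establish the following core lemma-like fact: for a fixed firm $i$ with matched partner of observed type $w(\mu(i))$, a pair $(i,j)$ can block only if the type $w(\mu(i))$ forces $w'(j) > w(\mu(i))$ across every consistent $w'$. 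But since any $w' \in \Pi_i(w,\mu,\lambda_i)$ may freely reassign the type of $j$ (subject only to the injectivity constraint in $\mathcal{W}$ and to fixing $\mu(i)$'s type), worker $j$ can almost always be assigned a type below $w(\mu(i))$ --- \emph{unless} $w(\mu(i))$ is already the minimum type $t_L$, in which case no type strictly below it exists and every consistent assignment must place $j$ strictly above.

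The forward direction (stability $\Rightarrow$ the stated condition) I would prove by contraposition. Suppose $w(j) = t_L$ for some worker $j$ but $\mu(i_n) \neq j$, so $j$ is matched to some firm of type strictly higher than $s_n$, while $i_n$ is matched to a worker of type $w(\mu(i_n)) > t_L$ (since $j$, the worker of type $t_L$, is matched elsewhere). I would then exhibit $(i_n, j)$ as a blocking pair: the worker-side condition $s_{i_n} > s_{\mu(j)}$ fails as stated --- so here I must instead use the worker of type $t_L$ as the one \emph{consenting} to block. The correct reading is that $i_n$ is the blocking firm and its current partner has type $w(\mu(i_n)) > t_L$; any $w' \in \Pi_{i_n}(w,\mu,\lambda_{i_n})$ fixes $w'(\mu(i_n)) = w(\mu(i_n)) > t_L$, and I must find a worker $j$ that $i_n$ is certain has higher type. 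The worker who is genuinely of type $t_L$ cannot serve, so the argument must instead show that \emph{some} worker is forced above $w(\mu(i_n))$ in every consistent assignment --- which happens precisely when $w(\mu(i_n)) = t_L$. I expect this is the subtle point to get right: the condition characterizes exactly when $i_n$'s partner is the unique candidate for the minimum, forcing a certain comparison.

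For the converse direction (the condition $\Rightarrow$ stability), I would assume that whenever $w(j) = t_L$ we have $\mu(i_n) = j$, and show no pair $(i,j)$ blocks. Fix any candidate firm $i$ with partner type $w(\mu(i))$. If $w(\mu(i)) > t_L$, then since $|T| = L > n$, there are at least two types strictly below $w(\mu(i))$ available, so I can construct a consistent $w' \in \Pi_i(w,\mu,\lambda_i)$ assigning any given worker $j \neq \mu(i)$ a type below $w(\mu(i))$ while keeping injectivity; hence the universal blocking condition fails and $(i,j)$ does not block. The only firm whose partner could have type $t_L$ is, by hypothesis, $i_n$ matched to that worker; but $i_n$ has the lowest firm type $s_n$, so the worker-side condition $s_i > s_{\mu(j)}$ can never be met for $i = i_n$ against any already-matched worker, and $(i_n, j)$ cannot block either. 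I expect the main obstacle to be the careful counting argument ensuring enough free types remain below $w(\mu(i))$ to build the required witness $w'$ respecting the injectivity constraint in $\mathcal{W}$; the hypothesis $L > n$ is exactly what guarantees this, and verifying it cleanly is where the real work lies.
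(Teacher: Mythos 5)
Your converse direction (the stated condition implies stability) is sound and is essentially the paper's own argument, apart from one harmless slip: you claim $L>n$ gives ``at least two types strictly below $w(\mu(i))$,'' which is false when $w(\mu(i))=t_{L-1}$ and also unnecessary --- a single lower type (namely $t_L$) assigned to the candidate worker $j$, with the remaining workers filled in injectively from the other $L-2\geq n-1$ types, already defeats the universal quantifier in Definition \ref{dfBLOCKING}.

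The forward direction, however, has a genuine gap: you never exhibit a blocking pair. Under the contrapositive hypothesis ($w(j)=t_L$ but $\mu(i_n)\neq j$) you only ever consider $i_n$ as the blocking firm --- first via the pair $(i_n,j)$, then via ``the correct reading is that $i_n$ is the blocking firm'' --- and you correctly observe that each such attempt fails. It must fail: $i_n$ can never be the firm in any blocking pair, since no worker satisfies the consent condition $s_{i_n}>s_{\mu(j')}$ (workers matched to other firms face $s_{\mu(j')}>s_n$, and $i_n$'s own partner gives equality); moreover $w(\mu(i_n))>t_L$ here, so $i_n$ is not even certain of any comparison. Your closing remark that blocking ``happens precisely when $w(\mu(i_n))=t_L$'' contradicts your own hypothesis, so the argument dead-ends without producing a block. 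The missing idea --- which is what the paper does --- is to take as blocking firm $i_\ell:=\mu(j)$, the firm matched to the $t_L$-worker, and as blocking worker $j_m:=\mu(i_n)$, the worker stuck with the worst firm. Firm $i_\ell$ observes its partner's type $t_L$; since $t_L=\min T$ and type assignments in $\mathcal{W}$ are injective, every $w'\in\Pi_{i_\ell}(w,\mu,\lambda_{i_\ell})$ satisfies $w'(j_m)>t_L=w'(j)$, so $i_\ell$ is certain; and $j_m$ consents because $i_\ell\neq i_n$ implies $f(i_\ell)>f(i_n)$. Note that this is exactly the one situation your own ``core lemma'' identifies as permitting certainty in a minimally informative state --- a firm whose partner carries the minimum type --- but you never applied it to the firm $\mu(j)$ on this side of the equivalence.
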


\begin{proof}
\noindent \textsc{($\Leftarrow $) Part:} Let $w\in \mathcal{W%
}$ be a type assignment such that $w(j_{k})=t_{L}$ for some $j_{k}\in J$ and
let $\mu $ be a matching where $\mu (i_{n})=j_{k}$. We will show that the
(minimally informative) matching state $\left( w,\mu ,\Pi (w,\mu ,\lambda
)\right) $ is stable. Recall that firm $i_{n}$ is of the respective
lowest (and publicly observable) assigned type; that is, it can never be a part of any blocking pair. Now consider any firm $i\neq i_{n}$ and let $\mu (i)=j$. Note that $j\neq j_{k}$ holds and thus, $w(j)>t_{L}$ follows. This implies that for any $j^{\prime}\neq j=\mu (i)$ there exists a type assignment $w^{\prime }\in \Pi_{i}(w,\mu ,\lambda _{i})$ such that $w^{\prime }(\mu (i)=j)>w^{\prime}(j^{\prime })$. Consequently, firm $i$ would not be a part of any blocking pair. Since $i$ was arbitrarily chosen, this completes the argument.

\vskip 0.2cm
\noindent
\textsc{($\Rightarrow $) Part:} Let the matching state $\left( w,\mu ,\Pi
(w,\mu ,\lambda )\right) $ be minimally informative and let $j_{k}\in J$ be
such that $w(j_{k})=t_{L}$. We claim that if $\left( w,\mu ,\Pi
(w,\mu ,\lambda )\right) $ is stable then $\mu(i_n)=j_k$. Suppose the claim is false, i.e., the matching $\mu $ is such that $\mu (i_{n})\neq j_{k}$. We will establish that $\left( w,\mu ,\Pi (w,\mu
,\lambda )\right) $ is not stable. Let $\mu (i_{n})=j_{m}$ and $\mu (i_{\ell
})=j_{k}$; that is, the firm with the lowest possible type is matched to worker $j_m$.  Note that $s_{\ell }=f(i_{\ell })>s_{n}=f(i_{n})$. Thus, worker $j_m$ prefers to be matched to firm $i_{\ell}$ over her current match $i_n$.  Moreover, for any type assignment $w^{\prime }\in \Pi _{i_{\ell }}(w,\mu ,\lambda
_{i_{\ell }})$, $w^{\prime }(j_{m})>w^{\prime }(j_{k})=t_{L}$. This last part follows since no two workers have the same assigned type and worker $j_k$ has the lowest assigned type $t_L$. Consequently, $(i_{\ell },j_{m})$ would block the matching state $\left( w,\mu ,\Pi (w,\mu,\lambda )\right)$.
\end{proof}

One important matching that we are interested in is the assortative matching.

\begin{defn}
A matching $\mu$ is \textbf{assortative} at a type assignment $w\in \clw$ if for each $i, i'\in I$,
\[
[f(i) > f(i')]\Leftrightarrow [w(\mu(i)) > w(\mu(i'))].
\]
A matching $\mu $ is assortative, if for every $w\in
\mathcal{W}$, $\mu $ is assortative at $w$. We
denote the assortative matching by $\mu ^{A}$.
\end{defn}

Proposition \ref{prop1} immediately leads to the following observation.

\begin{obs}\label{obs1}
{\rm Let $\mu ^{A}\in \mathcal{M}$ be
the assortative matching and let $w\in \mathcal{W}$ be a realized type
assignment. Then any minimally informative matching state $\left( w,\mu
^{A},\Pi (w,\mu ^{A},\lambda )\right) $ is stable. In other words, at any type assignment $w\in \mathcal{W}$, the complete information stable matching is contained in the set of matching states that are minimally informative and stable.}
\end{obs}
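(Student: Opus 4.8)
The plan is to derive the statement directly from Proposition \ref{prop1}, which characterizes when a minimally informative matching state is stable: it is stable precisely when every worker assigned the lowest type $t_L$ is matched to the lowest-type firm $i_n$. It therefore suffices to check that the assortative matching $\mu^A$ satisfies this characterizing condition at every realized type assignment $w \in \mathcal{W}$, and then to invoke the biconditional in Proposition \ref{prop1}.

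The one substantive step is to locate the partner of $i_n$ under $\mu^A$. Since $f(i_n) = s_n$ is the strictly lowest firm type, the definition of assortative matching yields, for every $i' \neq i_n$, the equivalence $f(i') > f(i_n) \Leftrightarrow w(\mu^A(i')) > w(\mu^A(i_n))$; as the left-hand side holds, so does the right. Hence $\mu^A$ assigns to $i_n$ the worker carrying the strictly lowest type at $w$.

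I would then verify the hypothesis of Proposition \ref{prop1}. If no worker is assigned $t_L$ at $w$, the implication ``$w(j) = t_L$ for some $j$ implies $\mu^A(i_n) = j$'' holds vacuously. Otherwise, fix $j$ with $w(j) = t_L$; because $t_L = \min\{T\}$ and distinct workers receive distinct types by the definition of $\mathcal{W}$, this $j$ is the unique worker with the lowest assigned type, so the previous step forces $\mu^A(i_n) = j$. In either case the condition of Proposition \ref{prop1} is met, and its ($\Leftarrow$) direction delivers stability.

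Finally, I would note that this conclusion is uniform across the quantifier ``any minimally informative matching state'': by Definition \ref{MI} the firms' information sets are pinned down to $\{w' \in \mathcal{W} : w'(\mu^A(i)) = w(\mu^A(i))\}$ regardless of the signal vector $\lambda$ or the workers' information sets, and blocking in Definition \ref{dfBLOCKING} depends only on these firm information sets together with worker preferences over the publicly known firm types. I do not expect a genuine obstacle; the sole point requiring attention is the distinctness-of-types assumption, which is exactly what promotes ``type $t_L$ is present'' to ``a uniquely lowest-ranked worker,'' thereby guaranteeing the hypothesis of Proposition \ref{prop1} rather than leaving it merely possible.
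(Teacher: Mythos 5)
Your proposal is correct and matches the paper's intended argument: the paper states Observation \ref{obs1} as an immediate consequence of Proposition \ref{prop1}, and your proof simply makes explicit the one step left implicit there, namely that assortativity forces $\mu^A(i_n)$ to be the worker with the strictly lowest assigned type at $w$ (handling vacuously the case where no worker has type $t_L$), after which the ($\Leftarrow$) direction of Proposition \ref{prop1} yields stability.
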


\subsection{Matching Mechanisms}

In a complete information environment where the types of both workers and firms are commonly known and preferences are increasing in types, the unique stable matching is assortative. In our setup, the definition of a blocking pair is demanding, which makes the corresponding notion of stability permissive: it allows matchings that may be stable under incomplete information but would not necessarily be stable under complete information. By contrast, \citet{Liu2} adopt a more stringent stability concept and identify conditions under which stability under complete information coincides with stability under incomplete information.

Given that workers’ types are privately known and, at the interim stage, are observed by the firms to which they are matched, one natural approach to studying matching under incomplete information is to focus on \textit{mechanisms} that implement particular matching rules.

A mechanism in our model asks each worker to report her type and for each profile of \textit{reported} types, specifies  a matching together with an
\textit{additional announcement} for each agent $k\in I\cup J$ . Notice that, in general, it may happen that two or more workers report the same type. In that case, one needs to break ties. A tie-breaking rule $\tau $ is simply an ordering of the workers in $J$. In other words, $\tau :\left\{ 1,\cdots ,n\right\} \rightarrow \left\{ 1,\cdots ,n\right\} $ is a bijection defining a precedence relation $\tau (1)\vartriangleright \cdots \vartriangleright \tau (n)$. We further assume that the additional announcements made by the mechanism are simply subsets of the set $T$ of workers' types.

\begin{defn}\label{dfmatchingmech}
Fix a tie-breaking rule $\tau$. A \textbf{matching mechanism} is a pair $\langle \sigma _{\tau},\alpha \rangle$ where $\sigma _{\tau }:T^{n}\rightarrow \mathcal{
M}$  is a matching rule and $\alpha =\left( \alpha
_{k}:T^{n}\rightarrow 2^{T}\right) _{k\in I\cup J}$ is a collection of
announcement rules.
\end{defn}

Therefore, for each profile of reports $r=(r(j))_{j\in J}$, a mechanism selects a matching $\sigma_{\tau}(r)$ and announces a subset $\alpha_k(r)$ of the set of workers' types to each agent $k\in I\cup J$. Following the standard tradition in mechanism design, we assume that the matching rule and the announcement rules are commonly known and that the designer fully commits to the mechanism. For the rest of our analysis of the one-sided incomplete information model, announcements made to the workers are redundant. Consequently, except for the concluding section where we elaborate on the two-sided incomplete information case, we will assume that for each $j\in J$, and for each report profile $r\in T^n$, $\alpha_j(r)=\emptyset$. Further, we denote by $\sigma _{\tau }(r;k)$ the matching partner of agent $k\in I\cup J$ under the matching rule $\sigma_{\tau }$ at $r\in T^{n}$; that is, if $\sigma _{\tau }(r)=\mu $, then $\sigma _{\tau }(r;k)\equiv \mu (k)$. The specific matching rule we are interested in is the \textit{assortative} rule that picks, for each profile of reports, the matching that is assortative with respect to the \textit{reported} types.

\begin{defn}\label{dfasmatrule}
Given a tie-breaking rule $\tau $, the matching rule $\sigma _{\tau }:T^{n}\rightarrow \mathcal{M}$ is \textbf{assortative} with respect to the reported types, if for
each $r\in T^{n}$ and each $i,i^{\prime }\in I$,
\[
\lbrack f(i)>f(i^{\prime })]\Rightarrow \lbrack r(\sigma _{\tau }(r;i))\geq
r(\sigma _{\tau }(r;i^{\prime }))]
\]
 with ties broken according to $\tau$.
\end{defn}

In  other words, the assortative matching rule operates as follows. For any two workers, the worker with the higher report is assigned to the firm with the higher type. If the two workers submit the same report, the worker who takes precedence in the ordering $\tau$ is assigned to the firm with the higher type. Given a tie-breaking rule $\tau $, we denote the corresponding assortative matching rule by $\sigma _{\tau }^{A}$.

Since workers' types are private information, the workers are aware that
they can manipulate the outcome of a matching mechanism by misreporting
their types. However, unlike standard mechanism design problems, the issue
of misreporting is more nuanced. Whenever a matching allocation $\mu $
realizes at a type assignment $w$, each firm $i$ gets to know the type $w(\mu (i))$ of its matched worker $\mu (i)$. A misreport by a worker will lead to a change in the matching prescribed by the mechanism. Following such a misreport by a worker, the firm may find that the observed type of its matched worker is \textit{inconsistent} with the information it has. To be more specific, consider a mechanism $\langle \sigma_{\tau},\alpha \rangle $, some worker $j$ of type $w(j)$, and suppose that worker $j$ misreports via some $r(j)\in T$, $r(j)\neq w(j)$. Given the reports $r\in T^{n}$, let $\mu=\sigma_{\tau}(r)$ denote the resulting matching where worker $j$ is matched to firm $i$. Also let $\lambda_i=\alpha_i(r)$ be the announcement made by the mechanism to firm $i$. Recall that  firm $i$ gets to observe the type $w(j)=w(\mu (i))$ of worker $j$. It may happen that there exists some other worker $j^{\prime }\neq j$ such that for all type assignments $w^{\prime }$ that firm $i$ considers as possible, i.e., for all $w^{\prime }\in \Pi _{i}(w,\mu ,\lambda _{i})$, $w^{\prime
}(j^{\prime })>w^{\prime }(j)=w(j)$. If it is also the case that $f(i)>f(\mu(j^{\prime }))$, then $i$ and $j^{\prime }$ would form a blocking pair for the matching state resulting from the misreport. If the matching state resulting from a misreport is blocked, we call the misreport \textit{unsuccessful}. Otherwise, it is successful. This is the idea behind our notion of \textit{successful manipulability}.

In order to formalize the idea, we introduce a few notations. Given a type
assignment $w\in \mathcal{W}$ and a report $r(j)\in T$ by worker $j$, we
denote by $(w_{-j},r(j))$ the profile of types where the type $w(j)$ is
replaced by $r(j)$. It is important to note that $(w_{-j},r(j))\in T^{n}$
and may not belong to $\mathcal{W}$, in the sense that $r(j)$ may coincide
with some $w(j^{\prime })$ in the vector $w_{-j}$. Since in a type
assignment no two workers are assigned the same type, such a profile $%
(w_{-j},r(j))\in T^{n}$ may not be a valid type assignment. Likewise, for a
report profile $r\in T^{n}$, we let $r_{-j}$ denote the profile of reports
by all workers other than $j$. Given a type assignment $w\in \mathcal{W}$, a
report profile $r\in T^{n}$, and a matching mechanism $\langle \sigma _{\tau
},\alpha \rangle $, we denote by $\Pi _{i}(w,\sigma _{\tau }(r),\alpha
_{i}(r))$ the information set for firm $i\in I$ at $w\in \mathcal{W}$ and $%
r\in T^{n}$ under the corresponding mechanism.

We now proceed with the formal definition of successful manipulability.
Since the types of the firms are commonly known, manipulability only applies
to workers.

\begin{defn}\label{defsm}
A matching mechanism $\langle \sigma_{\tau },\alpha \rangle $ is \textbf{successfully manipulable} by a
worker $j\in J$ at a type assignment $w\in \mathcal{W}$
via report $r(j)=t\in T$ if
 \begin{itemize}
   \item $f(\sigma _{\tau }(\left( w_{-j},t\right) ;j))>f(\sigma _{\tau
}(w;j))$, and
   \item the matching state $\left( w,\sigma _{\tau }(w_{-j},t),\Pi
(w,\sigma _{\tau }(w_{-j},t),\alpha (w_{-j},t)\right) $ is
stable.
 \end{itemize}
\end{defn}

The first condition says that if a worker $j$ misreports at a type
assignment $w$, it must lead to a matching where worker $j$ is matched to a
strictly higher ranked firm. The second condition is the critical part of
our successful manipulability requirement. The manipulation must be
sustainable - that is, the manipulation must not lead to a matching state
that is blocked by some firm-worker pair.

\begin{defn}\label{dfIC}
A matching mechanism $\langle \sigma_{\tau },\alpha \rangle $ is \textbf{incentive compatible} if there
does not exist a worker $j\in J$ who can successfully manipulate the mechanism at some type assignment $w\in \mathcal{W}$ under the assumption that all the other workers are reporting truthfully. A matching rule $\sigma ^{\prime }_{\tau}:T^{n}\rightarrow \mathcal{M}$ is \textbf{implementable}, if there exists an incentive compatible matching mechanism $\langle \sigma _{\tau },\alpha \rangle $  with  $\sigma _{\tau}=\sigma _{\tau }^{\prime }$.
\end{defn}

\begin{obs}\label{obs2}
{\rm Note that, according to our notion of
successful manipulability, when some worker $j\in J$ contemplates
misreporting at some type assignment $w\in \mathcal{W}$ via some $r(j)\neq
w(j)$, she is assuming that every other worker is reporting truthfully.
Consequently, at the assignment $w$ and for $j$'s report $r(j)$, the
relevant matching is $\sigma _{\tau }(w_{-j},r(j))$ and the relevant
collection of announcements is $\alpha (w_{-j},r(j))=\left( \alpha
_{i}(w_{-j},r(j))\right) _{i\in I}$. Since worker $j$ is not assumed to
observe $w$, she must verify the possibility of successful manipulation for
all possible type assignments $w^{\prime }\in \mathcal{W}$ where her type is
$w^{\prime }(j)=w(j)$.}
\end{obs}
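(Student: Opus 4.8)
The plan is to obtain Observation \ref{obs2} directly by unpacking the relevant definitions; no new construction is required, since the statement merely records the informational consequences of Definitions \ref{defsm} and \ref{dfIC} together with the specification of the worker information sets in Section 2. I would organize the argument around the two assertions the observation makes: first, that under the truthful-others hypothesis the relevant matching and announcements at $(w,r(j))$ are $\sigma_\tau(w_{-j},r(j))$ and $\alpha(w_{-j},r(j))$; and second, that worker $j$ must evaluate a candidate report against every $w'\in\mathcal{W}$ with $w'(j)=w(j)$.

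For the first assertion I would appeal to the truthful-reporting hypothesis built into Definition \ref{dfIC}. If every worker $j'\neq j$ reports truthfully, then $r(j')=w(j')$ for all such $j'$, so the realized report profile coincides with $(w_{-j},r(j))\in T^{n}$. By Definition \ref{dfmatchingmech} the mechanism then returns the matching $\sigma_\tau(w_{-j},r(j))$ and, for each firm $i$, the announcement $\alpha_i(w_{-j},r(j))$. This is precisely the matching state appearing in the second bullet of Definition \ref{defsm}, so the first assertion is immediate.

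For the second assertion I would invoke the specification of worker $j$'s information set, $\Pi_j(w,\mu)=\{w'\in\mathcal{W}\mid w'(j)=w(j)\}$, recalling that announcements to workers are empty so that this set is not further refined. Since worker $j$ observes only her own type and never the full assignment $w$, she cannot distinguish any two assignments that agree on her coordinate; consequently her entire reasoning about a fixed candidate report $t=r(j)$ takes place over the contingency set $\{w'\mid w'(j)=w(j)\}$. Because her report is a single action that cannot be conditioned on the unobserved $w$, checking whether $t$ constitutes a successful manipulation amounts to applying Definition \ref{defsm} at each $w'$ in this set; combined with the existential quantifier over type assignments in Definition \ref{dfIC}, this yields the second assertion.

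Since the statement follows by direct substitution into the definitions, I do not expect a substantive obstacle. The only point requiring care is the interplay between the pointwise formulation of successful manipulability at a fixed realized $w$ (Definition \ref{defsm}) and worker $j$'s informational constraint that $w$ is unobserved: it is this constraint that forces the evaluation to range over the whole information set $\{w'\mid w'(j)=w(j)\}$ rather than over a single assignment, and keeping these two levels of quantification distinct is the one place where the argument could be stated imprecisely.
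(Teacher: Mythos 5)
Your proposal is correct and follows essentially the same route as the paper, which states Observation \ref{obs2} without a separate proof precisely because it is a direct unpacking of Definitions \ref{dfmatchingmech}, \ref{defsm}, and \ref{dfIC} together with the specification of worker information sets $\Pi_j(w,\mu)=\{w'\in\mathcal{W}\mid w'(j)=w(j)\}$ and the convention $\alpha_j(r)=\emptyset$. Your closing caution about keeping the two levels of quantification distinct (the pointwise evaluation at a realized $w$ versus the worker's inability to observe $w$) is exactly the point the observation is meant to record.
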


Suppose now that the mechanism reveals a worker’s report to the firm to which she is matched. Recall that, in our setup, each firm observes the true type of its matched worker. If the revealed report of the firm’s matched partner differs from the worker’s observed true type, the firm concludes that the worker has misreported. Such an event leads to a refinement of the firm’s information set, which represents its \textit{beliefs} regarding the possible assignments of workers’ types. The notion of nontrivial updating, introduced below, specifies how firms refine their information sets upon detecting a misreport by their matched workers.

To formalize the idea, fix an assortative matching mechanism $\langle
\sigma _{\tau }^{A},\alpha \rangle $, a type assignment $w\in \mathcal{W}$,
and a report profile $r\in T^{n}$. Suppose that a worker $j\in J$ misreports
at $w$ via $r(j)\neq w(j)$. In the setup we consider, a meaningful misreport would imply $r(j)>w(j)$.  Notice that worker $j$ would prefer her match under $\sigma _{\tau }^{A}\left( w_{-j},r(j)\right) $ over her match under $\sigma _{\tau }^{A}\left( w\right) $ only if there is some other worker $j^{\prime }$ with $r(j)\geq w(j^{\prime })>w(j)$; otherwise, worker $j$ would be matched to the same firm under the two matchings. Define then the set
\begin{equation}\label{as1eq1}
\mathcal{W}(w(j),r(j))=\{w^{\prime }\in \mathcal{W}\mid w^{\prime }(j)=w(j)%
\text{ and }\exists \,j^{\prime }\neq j\text{ s.t. }r(j)\geq w^{\prime
}(j^{\prime })>w(j)\}
\end{equation}

\noindent
of type assignments, where the misreport $r(j)$ by worker $j$ of assigned type $w(j)$ could be payoff relevant for her.\footnote{ We would like to stress here that \textquotedblleft misreport\textquotedblright\ does not necessary mean \textquotedblleft
successful manipulation\textquotedblright.}

Consider next a firm $i$ and a report profile $r$. We say that the mechanism $\langle
\sigma _{\tau }^{A},\alpha \rangle $ \textit{reveals} the report of the matched worker to firm $i\in I$ if $r(\sigma_{\tau }^{A}(r;i))\in\alpha _{i}(r)$. Suppose in addition that $r(\sigma _{i}^{A}(r))>w(\sigma _{i}^{A}(r))$ holds; that is, firm $i$ becomes aware that  its matched worker has misreported upwards to the mechanism. In such a case, \textit{nontrivial updating} eliminates from firm $i$'s information set $\Pi _{i}(w,\sigma
_{\tau }^{A}(r),\alpha _{i}(r))$ type assignments which are not payoff
relevant for the misreporting worker. Observe that there are two cases where
it is not important how a firm updates its information: when this firm is $%
i_{n}$ or when a firm $i\in I$ is matched to a worker of the lowest possible
type ($t_{L}$). In the first case (due to firms' types being publicly known)
no worker is willing to form a blocking pair with $i_{n}$, while in the
second case firm $i$ can immediately conclude that forming a blocking pair
with any worker $j^{\prime }\neq j$ is worthy for it. Consequently, we have
the following formal definition of nontrivial updating.

\begin{assumption}\label{NT}
\textbf{} Fix an assortative matching mechanism $\langle \sigma _{\tau }^{A},\alpha \rangle$. We say that the information structure ${\Pi}=\{\Pi_i()\}_{i\in I}$  satisfies \textbf{nontrivial updating} with respect to $\langle \sigma _{\tau }^{A},\alpha \rangle$ if for any type assignment $w\in \mathcal{W}$, any report profile $r\in T^{n}$, and any firm $i\in I\setminus \left\{i_{n}\right\} $, the following holds: $r(\sigma_{\tau }^{A}(r;i))\in\alpha _{i}(r)$ and $r(\sigma_{\tau }^{A}(r;i)) >w(\sigma _{\tau }^{A}(r;i)))>t_{L}$ implies \[\Pi_i(w,\sigma^A_{\tau}(r),\alpha_i(r))\subseteq  \mathcal{W}(w(\sigma^A_{\tau}(r;i),r(\sigma _{\tau }^{A}(r;i))).\]
\end{assumption}
The nontrivial updating assumption can be seen as a minimal rationality requirement imposed on the information structure \textit{associated} with an assortative matching mechanism. It says that each firm $i\neq i_n$ takes into account any information that is ``payoff relevant'' when updating its ``beliefs'' regarding the possible assignment of types of the workers. As already explained above, firm $i_n$ cannot be part of any blocking pair and thus, no restriction is needed on that firm's belief updating. Likewise, if at some type assignment $w$, the assigned type of some worker $j$ is $w(j)=t_L$, then, as Proposition \ref{prop1} demonstrates,  any matching $\mu$ where $\mu(j)\neq i_n$ will be blocked. Hence, no restriction is needed on belief updating in this case, either. Note that when firm $i$ infers
that the report $r(\sigma _{\tau }^{A}(r;i))$ of its matched worker and the
true type of this worker matches the report, that is, $w(\sigma _{\tau
}^{A}(r;i))=r(\sigma _{\tau }^{A}(r;i))$, Assumption \ref{NT} does not
impose any restriction on the updating of beliefs. In addition, if at a
report profile $r\in T^{n}$ firm $i$ is not able to conclude that $r(\sigma
_{\tau }^{A}(r;i))>w(\sigma _{\tau }^{A}(r;i))$ is the case, then nontrivial updating does not impose any restriction, either.

 It is worth mentioning that not every assortative matching mechanism with the associated information structure satisfying nontrivial updating, is incentive compatible. In particular, the implementability of the assortative matching rule will crucially depend on the structure of the additional announcements.  As an example, consider the assortative matching mechanism $\langle\sigma _{\tau }^{A},\alpha ^{\emptyset }\rangle $ where for each firm $i\in I$ and for each report profile $r\in T^n$, $\alpha_i^{\emptyset}(r)=\emptyset$. Notice that for such a mechanism, the associated information structure vacuously satisfies nontrivial updating  (since the first condition in the ``if'' part of Assumption \ref{NT} is not satisfied). Clearly then, such \textit{empty}
announcements will not refine the information sets of the firms for any
profile of reports resulting in all matching states being \textit{minimally
informative}.  Proposition \ref{prop1} and Observation \ref{obs1} then tell us that all such
matching states will be stable, and thus, as explicitly demonstrated by Example \ref{1} in the Introduction, the assortative matching rule $\sigma _{\tau }^{A}$ will not be implementable in this case. The importance of nontrivial updating for assortative matching mechanisms with \textit{nonempty} announcements being incentive compatible is
illustrated in  the next section.

\section{Informative Public Announcements}

The empty signal fails to work because it conveys no information about a worker’s report to the firm to which the worker is matched; as a result, the firm cannot detect a misreport. In what follows, we introduce a specific class of announcement rules that enables such verification of misreports. We say that a collection of announcement rules $\alpha=(\alpha_i)_{i\in I}$ is \textit{public} if for all $i,k\in I$, $\alpha_i=\alpha_k$. In other words, the announcement rule is the same for each firm. In particular, the empty announcement rule as defined in the last paragraph of the previous section is public.

We further say that the collection $\alpha^I=(\alpha^I_i)_{i\in I}$ of public  announcement rules is \textit{informative}, if it is conveys the \textit{entire set} of reported types to every firm. That is, for each $i\in I$ and $r\in T^n$,
\[
\alpha _{i}^{I}(r)=\{t\in T\mid t=r(j)\text{ for some }j\in J\}.
\]
 Consider now the assortative matching rule $\sigma^A_{\tau}$. We call $\langle \sigma _{\tau }^{A},\alpha ^{I}\rangle $ the \textit{assortative matching with informative public announcement (AMIPA)} mechanism. Note that under the AMIPA mechanism the following information becomes publicly known: the assortative matching rule and its outcome at each report profile, as well as the entire set of reported types. Hence, this information leads to the reported type \textit{profile} becoming publicly known: a firm will be able to decipher the identity of the worker who has made a specific report. Since in our setup the firms observe the true type of the worker they are correspondingly matched to, each firm will also be able to detect a possible misreport of its matched worker.

Notice that under the AMIPA mechanism $\langle \sigma^A_{\tau}, \alpha^I \rangle $, we have $r(\sigma^A_{\tau}(r;i))\in\alpha^I_i(r)$ at every report profile $r\in T^n$ and for every $i\in I$. Thus, whenever $r(\sigma _{\tau }^{A}(r;i))>w(\sigma
_{\tau }^{A}(r;i))>t_{L}$ holds, the ``if'' part of Assumption \ref{NT} is satisfied. In other words, a firm will be able to detect a misreport by its matched partner. However, the possibility of such a detection does not immediately precipitate incentive compatibility. In order to make any progress, one needs the full force of the nontrivial updating assumption as the following example illustrates.

\begin{example}\label{ex-assumption1}
{\rm
Let $I=\{i_1,i_2\}$, $J=\{j_1,j_2\}$, and let $s_1=f(i_1)>f(i_2)=s_2$ be commonly known. Take $T=\{t_1,t_2,t_3,t_4\}$ with $t_1>t_2>t_3>t_4$. We will argue that the assortative matching rule $\sigma^A_{\tau}$ is not implementable by the AMIPA mechanism if the associated information structure violates nontivial updating (Assumption \ref{NT}). For this, we specify the information set of firm $i_1$ as follows: for any type assignment $w\in \mathcal{W}$, any matching $\mu\in \mathcal{M}$, any additional payoff relevant signal $\lambda_{i_1}\in 2^T$, and for $j\neq \mu(i_1)$, 
\begin{flalign}\label{eq1-assump-example}
\Pi_{i_1}(w,\mu,\lambda_{i_1}) &= \begin{cases}
                                       \{w'\in \clw \mid w'(\mu(i_1)=w(\mu(i_1))>w'(j)\} & \mbox{if } w(\mu(i_1))>t_4, \\
                                       \\
                                       \{w'\in \clw \mid w'(j)>w(\mu(i_1))\} & \mbox{if }w(\mu(i_1))=t_4.
                                     \end{cases}
\end{flalign}
\noindent
In other words, additional signals do not refine firm $i_1$'s information set. More importantly, in firm $i_1$'s assessment, its matched worker has the highest assigned type.  The only exception is when  the assigned type of the worker matched to the firm is the lowest possible type in $T$, i.e., when  $w(\mu(i_1))=t_4$.

Let us now consider the AMIPA mechanism $\langle \sigma^A_{\tau}, \alpha^I\rangle$. In what follows, we first show that the information set of firm $i_1$ defined above violates the nontrivial updating assumption with respect to the AMIPA mechanism. Next, as a consequence, we will show that in this case the assortative matching rule cannot be implemented by the AMIPA mechanism.\footnote{In fact, in this case, the assortative matching rule cannot be implemented by \textit{any} mechanism.} Recall that that for any report profile $r=(r(j_1),r(j_2))$, we have according to the AMIPA mechanism that $\alpha_i^I(r)=\{t\in T\mid t=r(j) \text{ for some } j\in J\}$ holds for each $i\in I$. Since the matching rule is assortative, as argued above, each firm can decipher the identity of the worker who has made a specific report. Consequently, and with a minor abuse of notation, we set $\alpha^I(r)=(\alpha_{i_{1}}^{I}(r),\alpha_{i_{2}}^{I}(r))=(r,r)$ for each $r\in T^n$. Now take the specific type assignment $w$ defined as follows: $w(j_1)=t_2>w(j_2)=t_3$. For the complete information assortative matching at $w$ we get $\{(i_1,j_1),(i_2,j_2)\}$.
Consider the report profile $\bar{r}$ where $j_2$ misreports via $t_1$: \[\bar{r}(j_1)=t_2 \mbox{ and } \bar{r}(j_2)=t_1.\] The matching at $\bar{r}$ will be $\sigma^A_{\tau}(\bar{r})=\{(i_1,j_2),(i_2,j_1)\}$. Since $\alpha^I(\bar{r})=(\bar{r},\bar{r})$, the profile of information sets is $\Pi(w,\sigma^A_{\tau}(\bar{r}),\bar{r})\big)$ and thus, the report of the matched partner $j_2$ of firm $i_1$ is revealed to $i_1$. Firm $i_1$ also observes that the true assigned type of its matched partner $j_2$ is $w(j_2)=t_3>t_4$. Therefore, at the report profile $\bar{r}$, firm $i_1$ is able to verify that its matched partner $j_2$ has misreported up, i.e., $\bar{r}(j_2)=t_1>w(j_2)=t_3>t_4$. So, all the conditions in the ``if'' part of Assumption \ref{NT} are satisfied.

 We will now demonstrate that $\big(w,\sigma^A_{\tau}(\bar{r}),\Pi(w,\sigma^A_{\tau}(\bar{r}),\bar{r})\big)$ is a stable matching state. This will then show that, at the type assignment $w$, worker $j_2$ can successfully manipulate the AMIPA mechanism via $\bar{r}(j_2)=t_1$. But note that this is immediate. The only blocking pair here must involve firm $i_1$. Now given the way firm $i_1$ forms its beliefs,  the mechanism's additional announcement does not lead to an update of its belief in \textit{any} way. In particular, the information set of firm $i_1$ at the type assignment $w$ given the matching $\sigma^A_{\tau}(\bar{r})$ and the public announcement $\alpha^I(\bar{r})=(\bar{r},\bar{r})$ is
 \begin{equation}\label{eq2-assump-example}
   \Pi_{i_1}(w,\sigma^A_{\tau}(\bar{r}),\bar{r})=\{w'\in \clw \mid w'(j_1)=t_4; \,w'(j_2)=t_3\}.
 \end{equation}
 That is, firm $i_1$ believes that the only type assignment given $w(j_2)=t_3$ is where $j_1$ has assigned type $t_4$. This is the case because the matched partner of $i_1$ has type $t_3$ which is not the lowest type and hence, only the first part of its information set in equation (\ref{eq1-assump-example}) applies. It is clear that equation \eqref{eq2-assump-example} is a violation of the nontrivial updating condition. But now note that, given the information set in equation \eqref{eq2-assump-example}, firm $i_1$ will never be a part of any blocking pair. So, the matching state $\big(w,\sigma^A_{\tau}(\bar{r}),\Pi(w,\sigma^A_{\tau}(\bar{r}),\bar{r})\big)$ is stable. Since $j_2$ can do this reasoning as well as firm $i_1$, worker $j_2$ can successfully manipulate the mechanism at $w$ via $\bar{r}(j_2)=t_1$. Therefore, the AMIPA mechanism is not incentive compatible.
}
\end{example}

\subsection{Non Coincidence of Reports}

We begin with the case in which report profiles are valid type assignments. That is, for any report profile $r\in T^{n}$, we have $r\in \mathcal{W}$. An immediate implication of this requirement is that no two workers can submit the same report. Admittedly, this is a restrictive assumption. However, it allows us to analyze a simplified environment first, before turning to the more general case in which coincident reports are permitted. To make
matters clear, let $\mathcal{R}$ denote the set of allowable report
profiles. In the case we consider in this subsection, $\mathcal{R}\subseteq
\mathcal{W}$. Since there is no coincidence of reports, the tie-breaking
rule $\tau $ does not play any role here and we therefore drop it from the
specification of the mechanism. Accordingly, the AMIPA mechanism takes the form $\langle \sigma ^{A},\alpha ^{I}\rangle $.

\begin{theorem} \label{Theorem1}
Let ${\cal R}\subseteq \mathcal{W}$. Then the AMIPA mechanism $\langle \sigma ^{A},\alpha ^{I}\rangle$ with associated information structure satisfying nontrivial updating is incentive compatible.
\end{theorem}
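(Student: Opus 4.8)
The plan is to read Observation \ref{obs2} as the operative content of manipulability: a worker who observes only her own type $w(j)=\theta$ will deviate to a report $t$ only if the induced matching state is stable at \emph{every} assignment compatible with $\theta$. Hence, to prove incentive compatibility it suffices to show that for each worker $j$, each $\theta$, and each profitable upward report $t>\theta$, one can exhibit a single assignment $w'$ with $w'(j)=\theta$ at which $t$ is still profitable (in the sense of the first bullet of Definition \ref{defsm}) yet the resulting state $\big(w',\sigma^{A}(w'_{-j},t),\Pi(\cdot)\big)$ is blocked; producing one such $w'$ already destroys the incentive to deviate. I would build $w'$ by induction on the position, counted from the highest type, that $\theta$ occupies in $w'$, which is the scheme announced before Theorem \ref{Theorem1}. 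Writing $\theta=t_{p}$, the cases $p\le 2$ are vacuous (no profitable $t$ exists) and $p=L$ is degenerate, so the work is in $3\le p\le L-1$.

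For the base case I would put $\theta$ in second position: a single worker $c$ of type $t_{p-1}$ (which lies in $(\theta,t]$ whenever $t$ is profitable) sits at the top, and the remaining $n-2$ workers carry types strictly below $\theta$. Reporting $t$ then lifts $j$ to $i_{1}$ and pushes $c$ down to $i_{2}$. Because $\alpha^{I}$ discloses the entire report profile and the $n-2$ low workers report below $\theta$, report monotonicity deletes them from $\Pi_{i_{1}}$ as possible blocking partners, leaving $c$ as the only worker whose type can fall in $(\theta,t]$; nontrivial updating (Assumption \ref{NT}) then forces $w'(c)>\theta$ throughout $i_{1}$'s information set, so $(i_{1},c)$ blocks. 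This reproduces the reasoning of Example \ref{2} and disposes of every $\theta$ admitting at least $n-2$ types below it.

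The inductive step treats the low types for which a second-position assignment is infeasible, i.e. fewer than $n-2$ types lie below $\theta$ and $\theta$ is forced to rank third or lower. Here I would instead build $w'$ so that \emph{all} types weakly below $\theta$ are occupied: $j$ takes $\theta$, the remaining workers exhaust $t_{L},t_{L-1},\dots$ up to the type just below $\theta$, and $c$ holds $t_{p-1}$, which (since $t_{p-1}<t$) automatically lands at a strictly lower-ranked firm than $j$ after the report. The decisive use of the hypothesis $\mathcal{R}\subseteq\mathcal{W}$ (no two reports coincide) is that report monotonicity together with distinctness pins the low workers to their reports from the bottom up --- the worker reporting $t_{L}$ must be of type $t_{L}$, which forces the worker reporting the next type, and so on. Consequently every type $\le\theta$ is already spoken for in \emph{every} assignment in that firm's information set, so $c$ cannot be of type $\le\theta$ and is necessarily above it; as $c$ sits at a lower firm than $j$, she and that firm block. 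Clearing the finitely many positions one at a time from the bottom upward is what makes the argument an induction and lets lower-position conclusions feed higher ones.

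The step I expect to be genuinely hard is the inductive one. In contrast to the second-position case, several workers now report types above $\theta$, and nontrivial updating by itself guarantees only that \emph{some} unspecified worker is of high type --- precisely the weakness that prevents a firm from blocking in a minimally informative state (Proposition \ref{prop1}). Turning this bare existence into a firm's \emph{certainty} about one named partner is the crux, and it is delivered not by updating but by the exhaustion-of-low-types argument, which the positional induction organizes. Finally, two boundary situations, coinciding with the exemptions written into Assumption \ref{NT}, must be handled directly: when $\theta=t_{L}$, Proposition \ref{prop1} shows that any matching moving $j$ off $i_{n}$ is blocked (every other worker is certainly of higher type), while a firm matched to a type-$t_{L}$ worker, and $i_{n}$ itself, never need to update; in either case no deviation can be sustained.
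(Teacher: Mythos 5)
Your proposal founders on its very first move: the reduction of incentive compatibility to exhibiting \emph{one} compatible assignment at which the deviation is blocked. Definition \ref{dfIC} declares the mechanism incentive compatible iff there exists \emph{no} worker, \emph{no} type assignment $w$, and no report $t$ such that the manipulation is successful \emph{at} $w$, where success at $w$ (Definition \ref{defsm}) means profitable at $w$ and stable at $w$. Proving this requires showing that at \emph{every} realized $w$ a profitable misreport produces a blocked state. Your weaker claim --- for each $(j,\theta,t)$ there exists some compatible witness $w'$ at which the misreport is profitable yet blocked --- is fully consistent with the existence of another compatible $w''$ at which the misreport is profitable and the resulting state is stable; at such a $w''$ the mechanism is successfully manipulable and Definition \ref{dfIC} fails. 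Observation \ref{obs2} does not license your reading: it records that the deviator must evaluate all compatible assignments, but the formal statement to be proven, and what the paper's proof actually establishes (its Statements A and B fix an \emph{arbitrary} $w$ with the deviator in a given position and derive a blocking pair at that very $w$), is the universal one. The difficulty of the theorem lives exactly in the assignments your constructions avoid: at an arbitrary $w$ the types below $\theta$ are not exhausted by the other workers, so your bottom-up pinning argument is unavailable, and the firm's information set genuinely contains assignments in which it cannot tell \emph{which} named worker lies above $\theta$ --- the situation your own sketch flags as the crux but then resolves only for hand-picked witnesses.

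A secondary but related point: your ``induction on position'' is not an induction. The base case (second-position witness) and the ``inductive step'' (exhaustion-of-low-types witness) are two independent, self-contained constructions, and the latter never invokes the former; it is a case split on whether $\theta$ has at least $n-2$ types below it. In the paper's proof the induction hypothesis does real work: when the firm matched to the deviator contemplates an assignment $w'$ in its information set in which the pivotal worker $j_{p'-1}$ lies \emph{below} the deviator's true type, the deviator's position in $w'$ would be some $p^{*}<p'$ and her report would there constitute a $(p^{*},k^{*})$-manipulation; $(p^{*},k^{*})$-non-manipulability --- the induction hypothesis --- is what removes such $w'$ from the firm's information set (together with nontrivial updating, Assumption \ref{NT}, which removes assignments where the deviator would be on top), leaving the firm certain that $j_{p'-1}$ is above its matched partner and hence producing the blocking pair at the arbitrary $w$. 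That pruning mechanism, absent from your proposal, is precisely what closes the gap between ``the deviator cannot be certain her misreport goes unblocked'' (what you prove) and ``the misreport is blocked at every realization'' (what the theorem asserts).
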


The proof of Theorem \ref{Theorem1} is relegated to  Appendix \ref{Proof-theorem1}. Below we provide an
example to demonstrate how the arguments in the proof work. Before we go
into the details of the example, we would like to again highlight two important aspects of the problem we consider. First, any worker $j$ of type $w(j)$, at the time of making a report $r(j)$, assumes that each of the other workers reports truthfully. Since $j$ is not assumed to know the realized type assignment, worker $j$ has to evaluate the consequences of her report for \textit{every type assignment} $w^{\prime }\in \mathcal{W}$ where $w^{\prime}(j)=w(j)$. Second, with the entire set of reported types being announced, each firm $i$ gets to observe the corresponding report profile as it is aware of the fact that the generated matching is the (positively)
assortative one with respect to the reported types. So, under the proposed
announcement, firm $i$ knows the type of its matched partner and the reports
of all the workers; especially, $i$ knows the reported types of all workers
who are willing to form a blocking pair with it. This rules out certain type
assignments as possible and leads to a refinement of firm $i$'s information
set. Since each worker can perform this reasoning as well as each firm, each
worker must take this reasoning into account when evaluating the possible
consequences of a (mis)-report. Exactly these two aspects, along with the
fact that the matching selected is assortative with respect to reports,
drive the result in Theorem \ref{Theorem1}.

\begin{example}\label{Exampleproof}
{\rm
Let $I=\left\{ i_{1},i_{2},i_{3}\right\} $ be
the set of firms and suppose their assigned types are commonly known with $%
s_{\ell }=f(i_{\ell })$ for each $\ell \in \left\{ 1,2,3\right\} $ and $%
s_{1}>s_{2}>s_{3}$. Let $J=\{j_{1},j_{2},j_{3}\}$ be the set of workers and $%
T$ the set of possible workers' types. Consider the AMIPA mechanism $%
\left\langle \sigma ^{A},\alpha ^{I}\right\rangle $ with the associated information structure satisfying nontrivial updating and let us provide an argument establishing incentive
compatibility of the mechanism. To that end, let $w=(t_{1},t_{2},t_{3})\in
\mathcal{W}$ be a specific realized type assignment with $t_{1}>t_{2}>t_{3}$%
. We make two claims. First, we show that a worker $j\in J$, whose realized
type is the second-ranked in $w$ (i.e., $w(j)=t_{2}$), cannot successfully
manipulate the mechanism (Claim 1). Given Claim 1, we then demonstrate that
a worker $j^{\prime }\in J$ whose realized type is the third-ranked in $w$
(that is, $w(j^{\prime })=t_{3}$) cannot successfully manipulate the
mechanism, either (Claim 2). Since we do not impose any restrictions on the
selection of $w\in \mathcal{W}$, the combination of these two claims
completes the proof of incentive compatibility of the AMIPA mechanism in
this particular example.\medskip \textrm{\ }\newline
\textrm{\noindent \textsc{Cl}}\textsc{aim 1}\textrm{\textsc{\textbf{\ }}}Let
$w=(t_{1},t_{2},t_{3})\in \mathcal{W}$ with $t_{1}>t_{2}>t_{3}$ be a
realized type assignment and suppose that, w.l.o.g., $w(j_{2})=t_{2}$. Then
it is impossible that worker $j_{2}$ successfully manipulates $\langle
\sigma ^{A},\alpha ^{I}\rangle $ at $w$.\medskip \newline
\textit{Proof of Claim }1. The assortative matching at the type assignment $%
w $ is $\sigma ^{A}(w)=\{(i_{1},j_{1}),(i_{2},j_{2}),(i_{3},j_{3}\}$.
Suppose now $j_{2}$ misreports at $w$ via some $\bar{t}_{2}>t_{1}$. The
resulting profile of reports would be $r=(r(j_{1}),r(j_{2}),r(j_{3}))=(t_{1},%
\bar{t}_{2},t_{3})$. At the report profile $r$, the generated assortative
matching would be%
\begin{equation*}
\sigma ^{A}(r)=\{(i_{1},j_{2}),(i_{2},j_{1}),(i_{3},j_{3})\}.
\end{equation*}%
The relevant matching state is then $\left( w,\sigma ^{A}(r),\Pi (w,\sigma
^{A}(r),\alpha ^{I}(r))\right) $. We will argue that this matching state is
not stable. For this, we show that the state is blocked by the pair $\left(
i_{1},j_{1}\right) $. Clearly, $j_{1}$ is willing to participate in this
blocking pair due to firms' types being publicly known and $i_{1}$ having
the highest possible type. As for the incentives of firm $i_{1}$ with $%
\alpha _{i_{1}}^{I}(r)=\left\{ t_{1},\bar{t}_{2},t_{3}\right\} $, we are
left with the following two mutually exclusive possibilities with respect to
its beliefs:

(1) there exists $w^{\prime }\in \Pi _{i_{1}}(w,\sigma ^{A}(r),\alpha
_{i_{1}}^{I}(r))$ with $\bar{t}_{2}>t_{2}>w^{\prime }(j_{1})$, or

(2) for all $w^{\prime }\in \Pi _{i_{1}}(w,\sigma ^{A}(r),\alpha
_{i_{1}}^{I}(r))$, $w^{\prime }(j_{1})>t_{2}$.

Notice that the existence of $w^{\prime }$ as described in the first case
would imply, by $t_{2}>t_{3}$ and each worker reporting upwards to the
mechanism (i.e., $t_{3}\geq w^{\prime }(j_{3})$), that worker $j_{2}$ is of
the highest realized type in the type assignment $w^{\prime }$. However, as
firm $i_{1}$ has detected $j_{2}$'s misreport by observing $\bar{t}%
_{2}>t_{2} $, we have a contradiction to the information structure satisfying nontrivial updating.
This excludes the first case,  while the
second case implies that $i_{1}$ and $j_{1}$ would from a blocking pair. We
conclude that the matching state $\left( w,\sigma ^{A}(r),\Pi (w,\sigma
^{A}(r),\alpha ^{I}(r))\right) $ is not stable and thus, worker $j_{2}$
cannot successfully manipulate $\langle \sigma ^{A},\alpha ^{I}\rangle $ at $%
w$.\medskip\ \newline
\noindent \textsc{Claim 2}\textrm{\textsc{\textbf{\ }}}Let $%
w=(t_{1},t_{2},t_{3})\in \mathcal{W}$ with $t_{1}>t_{2}>t_{3}$ be a realized
type assignment and suppose that, w.l.o.g., $w(j_{3})=t_{3}$. Then it is
impossible that worker $j_{3}$ successfully manipulates $\left\langle \sigma
^{A},\alpha ^{I}\right\rangle $ at $w$.\medskip\ \newline
\textit{Proof of Claim }2. A manipulation by worker $j_{3}$ at the type
assignment $w$ would imply the existence of a type $\bar{t}_{3}$ such that
either (1) $t_{1}>\bar{t}_{3}>t_{2}>t_{3}$ or (2) $\bar{t}%
_{3}>t_{1}>t_{2}>t_{3}$ is the case. We call the first possibility \textit{%
swap-1 manipulation} and the second possibility \textit{swap-2 manipulation}%
.\medskip \newline
\textsc{Swap-1 manipulation: }Note that here the report profile is $%
r=(t_{1},t_{2},\bar{t_{3}})$ with $t_{1}>\bar{t}_{3}>t_{2}>t_{3}$. The
generated assortative matching would be $\sigma
^{A}(r)=\{(i_{1},j_{1}),(i_{2},j_{3}),(i_{3},j_{2})\}$. We will argue that
the relevant matching state $\left( w,\sigma ^{A}(r),\Pi (w,\sigma
^{A}(r),\alpha ^{I}(r))\right) $ is not stable. For this, we show that the
state is blocked by the pair $\left( i_{2},j_{2}\right) $. Clearly, $j_{2}$
is willing to participate in this blocking pair due to firms' types being
publicly known and $i_{2}$ being of a higher type in comparison to $i_{3}$.
As for the incentives of firm $i_{2}$, notice that this firm in particular
observes the reports of worker $j_{3}$ ($\bar{t}_{3}$) and of worker $j_{2}$
($t_{2}$) due to $\alpha _{i_{2}}^{I}(r)=\left\{ t_{1},\bar{t}%
_{3},t_{2}\right\} $, as well as the true type $w(j_{3})=t_{3}$ of its
matched worker $j_{3}$. Hence, there are four mutually exclusive cases for
any $w^{\prime }\in \Pi _{i_{2}}(w,\sigma ^{A}(r),\alpha _{i_{2}}^{I}(r))$:

(a) $t_{3}>w^{\prime }(j_{1}),w^{\prime }(j_{2})$,

(b) $w^{\prime }(j_{1})>t_{3}>w^{\prime }(j_{2})$,

(c) $w^{\prime }(j_{1}),w^{\prime }(j_{2})>t_{3}$, or

(d) $w^{\prime }(j_{2})>t_{3}>w^{\prime }(j_{1})$.

Now case (a) is ruled out by the nontrivial updating assumption. Note next that there
can be two possibilities for any $w^{\prime }$ satisfying the condition in
case (b). The first possibility of $w^{\prime }(j_{1})>\bar{t}%
_{3}>t_{3}>w^{\prime }(j_{2})$ is again ruled out by the nontrivial updating assumption (since there is no worker whose $w^{\prime }$-assigned type is between $%
t_{3} $ and $\bar{t}_{3}$). The other possibility is $\bar{t}_{3}>w^{\prime
}(j_{1})>t_{3}>w^{\prime }(j_{2})$. But then worker $j_{3}$ of type $t_{3}$
is misreporting at a type assignment where her type is second ranked, a
situation already excluded by Claim 1. The consequence of cases (a) and (b) being
ruled out is that, for any $w^{\prime }\in \Pi _{i_{2}}(w,\sigma
^{A}(r),\alpha _{i_{2}}^{I}(r))$, either case (c) or case (d) holds. But in
any of these cases, $w^{\prime }(j_{2})>t_{3}$. Thus, firm $i_{2}$ and
worker $j_{2}$ would form a blocking pair. We conclude that the matching
state $\left( w,\sigma ^{A}(r),\Pi (w,\sigma ^{A}(r),\alpha ^{I}(r))\right) $
is not stable and thus, worker $j_{3}$ cannot successfully swap-1 manipulate
$\langle \sigma ^{A},\alpha ^{I}\rangle $ at $w$.\medskip \newline
\textsc{Swap-2 Manipulation:} The report
profile here is $r=(t_{1},t_{2},\bar{t_{3}})$ with $\bar{t}%
_{3}>t_{1}>t_{2}>t_{3}$. The resulting assortative matching would be $\sigma
^{A}(r)=\{(i_{1},j_{3}),(i_{2},j_{1}),(i_{3},j_{2})\}$. We will argue that the resulting matching state, $\big(w,\sigma^A(r),\Pi(w,\sigma^A(r),\alpha^I(r))\big)$ is not stable.

In this case, any blocking pair, must involve firm $i_1$. Note that under the AMIPA mechanism firm  $i_1$ observes the report profile  $r$ and $w(j_3)=t_3$. Now an analogous argument\footnote{%
The complete argument is presented in the proof of Theorem \ref{Theorem1}.}
 as in the case of swap-1 manipulation leads to the conclusion that $(i_1,j_2)$ would form a blocking pair for the matching state  $\big(w,\sigma^A(r),\Pi(w,\sigma^A(r),\alpha^I(r))\big)$.
}
\end{example}

\subsection{Coincidence of Reports}

We now allow for the possibility that different workers may submit the same report to the mechanism. Consequently, the set of admissible report profiles becomes $\mathcal{R}\equiv T^n$. In this case, the tie-breaking rule $\tau$ is clearly required as part of the mechanism.

\begin{theorem}\label{Theorem2}
Let $\mathcal{R}=T^n$. Then the AMIPA mechanism $\langle \sigma^A_{\tau}, \alpha^I \rangle$ with associated information structure satisfying nontrivial updating is incentive compatible.
\end{theorem}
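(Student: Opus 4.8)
The plan is to reduce the problem to Theorem \ref{Theorem1} and to isolate the single new phenomenon created by coincident reports. Fix a worker $j$ with realized type $w(j)=t$ and a deviation $r(j)=\bar{t}$. Reporting below the true type is never beneficial, so I may take $\bar{t}>t$, and since the worker assumes every other worker reports truthfully, the induced profile is $(w_{-j},\bar{t})$. Because $w\in\mathcal{W}$ has pairwise distinct types, this profile exhibits \emph{at most one} coincidence: either $\bar{t}\neq w(j')$ for all $j'\neq j$, in which case $(w_{-j},\bar{t})\in\mathcal{W}$ and Theorem \ref{Theorem1} already shows the deviation is unsuccessful, or $\bar{t}=w(j^{\ast})$ for a \emph{unique} worker $j^{\ast}$. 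Only the latter case needs a new argument, and this is precisely where the tie-breaking rule $\tau$ enters.

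In the coincidence case, let $p$ and $q$ be the ranks of $\bar{t}$ and $t$ in $w$, so $p<q$, and write $\mu=\sigma^{A}_{\tau}(w_{-j},\bar{t})$. Under $\sigma^{A}_{\tau}$ the tied reports of $j$ and $j^{\ast}$ fill ranks $p$ and $p+1$, ordered by $\tau$. If $j^{\ast}\vartriangleright j$ and $q=p+1$ (no worker has a type strictly between $t$ and $\bar{t}$), then $j$ remains at rank $q$ and the deviation is not beneficial, so it can be ignored. Otherwise $j$ strictly improves, landing at the firm $\hat{\imath}$ of rank $p$ (if $j\vartriangleright j^{\ast}$) or $p+1$ (if $j^{\ast}\vartriangleright j$), and the worker $j'$ originally occupying $\hat{\imath}$'s rank is pushed down by one position. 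The observation that enables the block is that, since $j$'s report replaced $t$ in the profile and no other worker carries type $t$, the announced set $\alpha^{I}(r)$ does not contain $t$; because $\hat{\imath}$ observes its matched worker's true type $t\notin\alpha^{I}(r)$, it detects that $j$ has deviated upward. The hypotheses of Assumption \ref{NT} then hold for $\hat{\imath}$ (with $\bar{t}>t>t_{L}$; the boundary case $t=t_{L}$ is governed by Proposition \ref{prop1}), and nontrivial updating yields $\Pi_{\hat{\imath}}\subseteq\mathcal{W}(t,\bar{t})$.

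With this refinement in hand I would replicate the rank-induction of Theorem \ref{Theorem1}. Taking as the induction hypothesis that no worker can successfully manipulate at an assignment in which her type occupies a strictly higher position (smaller rank index) than at $w$, I enumerate the possible types, across assignments $w'\in\Pi_{\hat{\imath}}$, of the displaced worker $j'$. Nontrivial updating discards the assignments in which no worker exceeds $t$; the upward-reporting constraint $w'(j'')\le r(j'')$ eliminates as possible witnesses the workers whose reports lie below $t$; and the induction hypothesis disposes of the residual assignments in which $j$'s type is ranked strictly higher. What survives are exactly the assignments with $w'(j')>t=w'(j)$, so $\hat{\imath}$ is certain that $j'$ outranks its current match. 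Since firms' types are public and $f(\hat{\imath})>f(\mu(j'))$, the pair $(\hat{\imath},j')$ blocks the matching state, and the deviation is unsuccessful; the worker, able to reproduce this reasoning, therefore does not manipulate.

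The main obstacle is the tie itself. First, because $\alpha^{I}$ reveals only the \emph{set} of reported types, the coincidence erases the multiplicity of $\bar{t}$, so one must verify carefully that $\hat{\imath}$ still detects the deviation and that Assumption \ref{NT} applies with the correct report value $\bar{t}$; detection via $t\notin\alpha^{I}(r)$ is exactly what makes this go through. Second, and more delicate, when $j\vartriangleright j^{\ast}$ the displaced worker $j'$ \emph{is} $j^{\ast}$, whose own report coincides with $\bar{t}$; this shifts the boundary cases of the case-elimination used in Theorem \ref{Theorem1}, and I expect the bulk of the work to lie in re-running that elimination in the presence of the tied report. Treating the two $\tau$-orderings symmetrically, together with the degenerate non-beneficial sub-case, then closes the induction and completes the proof.
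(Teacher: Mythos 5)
Your proposal is correct in outline, but it takes a genuinely different route from the paper. The paper proves Theorem \ref{Theorem2} by a reduction-by-isomorphism: it constructs an \emph{expanded} type set $T(\tau_0)$ in which every type $t_{\ell}$ is split into $n$ ordered copies $t_{\ell,k}$, maps the realized assignment $w$ and the tied report profile $r$ into this space (so that the tie-breaking order $\tau$ turns the coincidence into a strict ranking), observes as you do that at most two reports can ever coincide, and then invokes Theorem \ref{Theorem1} as a black box in the auxiliary (tie-free) problem, transporting the conclusion back through the claimed isomorphism. You instead propose to re-run the rank induction of Theorem \ref{Theorem1} directly in the presence of the tie, powered by the nice observation that detection still works because $t\notin\alpha^{I}(r)$, plus the usual NT refinement and the induction on position. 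What each approach buys: the paper's expanded-space trick avoids redoing the laborious case elimination entirely (at the price of an isomorphism assertion that is itself left unverified), whereas your route is self-contained but must confront a complication you only partially flag --- with a coincidence the announced \emph{set} has $n-1$ elements, so a firm generally cannot reconstruct the full report profile (it may be unable to tell \emph{which} value was duplicated), and hence the upward-reporting constraints it can impose on other workers are weaker than in Theorem \ref{Theorem1}. This is survivable, because the two exclusions that actually drive the block --- nontrivial updating, and the induction-hypothesis exclusion of assignments in which the \emph{matched misreporting worker's} position is strictly better (where her deviation would already be blocked) --- are independent of the duplication hypothesis the firm entertains; under any such hypothesis the matched worker's inferred report exceeds its observed true type, and every assignment consistent with a ``wrong'' duplication hypothesis places the misreporter at a strictly better position and is therefore eliminated by the induction. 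You should also note that your choice of blocking partner (the displaced worker) differs from the paper's (the worker with the lowest report above the misreporter's true type); both work, since the surviving assignments pin the misreporter at exactly her true position, but this substitution needs to be stated and justified rather than assumed. Finally, the boundary case $w(j)=t_{L}$ is handled by the discussion preceding Assumption \ref{NT} (the firm blocks with anyone) rather than by Proposition \ref{prop1}, which concerns minimally informative states.
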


The proof of Theorem \ref{Theorem2} is relegated to Appendix \ref{Proof-theorem2}. It is worth noting that the nontrivial updating assumption (Assumption \ref{NT}), which is crucial for both Theorems \ref{Theorem1} and \ref{Theorem2}, imposes some “alignment of beliefs” between a firm and the worker to whom it is matched. For example, consider a worker $j$ with true type $w(j)$ who contemplates a misreport $t$, assuming all other workers report truthfully. From $j$’s perspective, such a misreport would result in a matching where firm $i$ observes both $j$’s true type $w(j)$ and the misreported type $t$. To assess the consequences of this misreport, worker $j$ must form beliefs about the beliefs of firm $i$ when encountering the misreport. Since the information structure is commonly known, $j$ understands how $i$ forms these beliefs. Nontriviality thus requires that firm $i$ updates its beliefs to the subset of type assignments for which the misreport could be “beneficial” to worker $j$. This alignment of beliefs between firms and their matched workers underpins the incentive compatibility of the AMIPA mechanism. 

We identify two potential approaches to dispense with the nontrivial updating condition. The first approach assumes that the type assignment is commonly known among workers. If each worker, when reporting, assumes that all other workers report truthfully, then common knowledge of the type assignment ensures that at any report profile there can be at most one misreport. This directly generates the alignment of beliefs described above. The second approach assumes that each firm believes that every worker, except for the one to whom it is matched, reports truthfully. Although this setup allows for the possibility of multiple misreports at a report profile, the assumption effectively precludes such scenarios. 

In other words, replacing the nontrivial updating condition with either of the assumptions outlined above would require imposing another strong requirement on the information structure. While Assumption \ref{NT} is admittedly strong, we consider it reasonable and proceed with our analysis under this condition.

\section{Private Announcements}

In this section, we explore the possibility of relaxing the requirement that the entire set of reported types be publicly announced. As noted above, this is equivalent to publicly revealing the entire report profile and has two key consequences. First, any firm can detect a potential misreport by its matched worker. Second, because the firm observes the true type of its matched worker, it can compare this observed type with the reports of other workers. Consider a firm $i$ matched to a worker $j$ at the report profile $r$, and suppose that $j$ has misreported ``up'', i.e., $r(j)>w(j)$. Any worker $j'$ who reports $r(j')>r(j)$ will be matched with a firm of higher type than $i$ and therefore cannot form a blocking pair with $i$. Hence, for firm $i$, the only relevant portion of the publicly announced report profile is the subset of reports that  are ``lower'' than $r(j)$.

Motivated by this observation, we consider mechanisms in which each firm observes only the subset of reports that are lower than the report of its matched worker, and possibly the report of that worker itself. The reasoning above suggests that this “limited information” may be sufficient to ensure incentive compatibility. Because these announcements are firm-specific and vary across firms, we refer to them as \textit{private announcements}.

\subsection{Lower Contour Sets Announcements}

We start by defining the mechanism $\langle \sigma _{\tau }^{A},\alpha
^{L}\rangle $ where $\sigma _{\tau }^{A}$ is the assortative matching rule
and $\alpha ^{L}=\left( \alpha _{i}^{L}\right) _{i\in I}$ is the collection
of announcement rules such that, for every report profile $r\in T^{n}$,
\begin{equation}\label{PAeq1}
\alpha _{i}^{L}(r)=\begin{cases}
                     \left\{ t\in T\mid t=r(j)\leq r(\sigma _{\tau }^{A}(r;i))\text{ for some }%
j\in J\right\} & \mbox{if } i\in I\setminus \left\{ i_{n}\right\} , \\
                     \emptyset & \mbox{if }i=i_{n}.
                   \end{cases}
\end{equation}

In words, each firm $i\neq i_n$ gets to know the assortative matching for each profile of reports, together with all reports that are in $i$'s \textit{lower contour set} of reported types. Given that the matching rule $\sigma^A_{\tau}$ is assortative, this is equivalent to saying that each firm $i\neq i_n$ gets to observe the sub-profile of reports that are \textit{weakly} lower than the report of its matched partner. Any kind of announcement by the mechanism for the lowest ranked firm $i_n$ is irrelevant, since $i_n$ will never be a part of a blocking pair. This is the reason for setting $\alpha _{i_{n}}^{L}(r)=\emptyset $ for all $%
r\in T^{n}$ in the definition of $\langle \sigma _{\tau }^{A},\alpha
^{L}\rangle $.

Observe that the above mechanism enables each firm $i\neq i_n$ to (i) detect a possible misreport by its matched worker, and (ii) to compare the \textit{position} of the true assigned type of the matched worker in the corresponding sub-profile of types in the lower contour set of reported types. Building on this insight, we present the following result, which is an immediate consequence of Theorem \ref{Theorem1} and Theorem \ref{Theorem2}.

\begin{prop}\label{PropLC}
The assortative matching mechanism $\langle \sigma^A_{\tau},\alpha^L \rangle$ with associated information structure satisfying nontrivial updating is incentive compatible.
\end{prop}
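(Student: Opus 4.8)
The plan is to exploit the fact that $\langle \sigma^A_\tau,\alpha^L\rangle$ and the AMIPA mechanism $\langle \sigma^A_\tau,\alpha^I\rangle$ share the \emph{same} matching rule $\sigma^A_\tau$, so that they differ only in what each firm is told about the reports. Since the first (benefit) condition of Definition~\ref{defsm} depends solely on the matching rule, a misreport is payoff-improving for a worker under $\langle \sigma^A_\tau,\alpha^L\rangle$ exactly when it is under $\langle \sigma^A_\tau,\alpha^I\rangle$. It therefore suffices to show that whenever the proofs of Theorem~\ref{Theorem1} and Theorem~\ref{Theorem2} exhibit a blocking pair that defeats a payoff-improving misreport under $\alpha^I$, the very same pair blocks under $\alpha^L$. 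The firm $i_n$ never participates in a block and $\alpha^L_{i_n}(r)=\emptyset$, so only firms $i\neq i_n$ require attention, and the entire content of the argument is the claim that the reduced announcement $\alpha^L$ still conveys to each such firm all the information those proofs actually use.

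First I would record what a firm $i\neq i_n$ knows under $\alpha^L$. It observes the assortative matching $\sigma^A_\tau(r)$, which by assortativity reveals the complete \emph{ordinal} ranking of the reports. Its matched worker's report $r(\sigma^A_\tau(r;i))$ lies in its lower contour set, so $i$ learns this value and, comparing it with the observed true type $w(\sigma^A_\tau(r;i))$, detects any upward misreport exactly as under $\alpha^I$; thus the ``if'' part of Assumption~\ref{NT} is triggered under $\alpha^L$ whenever it is under $\alpha^I$, and the induced restriction $\Pi_i\subseteq \mathcal{W}\big(w(\sigma^A_\tau(r;i)),r(\sigma^A_\tau(r;i))\big)$ is identical, since it depends only on the matched worker's true type and report. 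In addition $\alpha^L$ reveals the cardinal values of all reports weakly below $r(\sigma^A_\tau(r;i))$. The only datum withheld relative to $\alpha^I$ is the cardinal value of the reports strictly above $r(\sigma^A_\tau(r;i))$.

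Next I would argue that these withheld values are immaterial for blocking. Any worker $j'$ willing to block with $i$ must be matched to a strictly lower-type firm, and by assortativity such a worker satisfies $r(j')\le r(\sigma^A_\tau(r;i))$, i.e.\ $j'$ lies in $i$'s lower contour set; hence every candidate blocking pair $(i,j')$ used in the proofs of Theorems~\ref{Theorem1} and \ref{Theorem2} involves only lower-contour coordinates. Validity of the block $(i,j')$ amounts to the condition that $w'(j')>w'(\sigma^A_\tau(r;i))$ for every $w'\in\Pi_i$, which refers solely to the $j'$- and $\sigma^A_\tau(r;i)$-coordinates (the latter fixed at the observed true type). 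Every inference those proofs make to establish this certainty---nontrivial updating, the position of the matched worker's true type deduced from ``workers report up'', and the inductive elimination, on the \emph{rank} of the matched worker's type, of assignments in which that type would be lower-ranked---is expressible using only the matched worker's true type, the lower-contour report values, and the ordinal ranking of reports, all of which $\alpha^L$ supplies. The upper-contour report values enter none of these steps.

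The step I expect to be the main obstacle is verifying that the \emph{coarser} information set under $\alpha^L$ does not admit additional type assignments that break a block succeeding under $\alpha^I$. Since $\Pi_i$ under $\alpha^L$ is a superset of $\Pi_i$ under $\alpha^I$, one must check that any $w'$ admitted under $\alpha^L$ with $w'(j')\le w'(\sigma^A_\tau(r;i))$ is already admitted under $\alpha^I$. Here the ``report up'' property together with $t_1$ being the maximal type is decisive: the types that such a $w'$ assigns to the upper-contour workers can always be reconciled with \emph{some} profile of reports weakly above $r(\sigma^A_\tau(r;i))$, so the set of realizable configurations on the blocking-relevant lower-contour coordinates coincides under $\alpha^I$ and $\alpha^L$. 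Consequently the inductive resolution of such assignments in the proofs of Theorems~\ref{Theorem1} and \ref{Theorem2} applies verbatim, every blocking pair transfers, and $\langle \sigma^A_\tau,\alpha^L\rangle$ is incentive compatible. A little bookkeeping is needed to ensure the distinctness constraint defining $\mathcal{W}$ can be met when completing the upper-contour coordinates, but with $L>n$ available types this is routine.
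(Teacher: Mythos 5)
Your proposal is correct and follows essentially the same route as the paper, which likewise obtains Proposition~\ref{PropLC} as an immediate consequence of Theorems~\ref{Theorem1} and~\ref{Theorem2}: by assortativity and publicly known firm types, any worker willing to block with a firm $i\neq i_n$ must have a report in $i$'s lower contour set, so misreport detection, nontrivial updating, and the rank-based elimination used in those proofs rely only on information that $\alpha^L$ still conveys. Your closing ``superset'' verification is superfluous (and in its literal membership form---that every bad assignment admitted under $\alpha^L$ already lies in $\Pi_i$ under $\alpha^I$---it is not true, since assignments giving upper-contour workers types above their actual reports can enter the coarser set); once the exclusion steps of Theorems~\ref{Theorem1} and~\ref{Theorem2} are seen to invoke only lower-contour data and Assumption~\ref{NT}, they rule out the problematic assignments from the information sets under $\alpha^L$ directly, with no comparison across the two announcement rules required.
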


Consider a type assignment $w$ and a profile of reports $r$. When the mechanism reveals to firm $i\neq i_n$ the entire lower contour set of reports (including the report of its matched partner), firm $i$ is able to settle two issues. First, firm $i$ can detect a potential misreport by its matched partner. Second, and more importantly, firm $i$ is able to compare the realized type of its matched partner $w(\sigma^A_{\tau}(r))$ with the reports of the remaining workers in the lower contour set. Therefore, firm $i$ can infer that any worker $j'\neq \sigma^A_{\tau}(r;i)$ whose report is lower than the type of its matched partner $ \sigma^A_{\tau}(r;i)$ indeed has a true assigned type lower than $w( \sigma^A_{\tau}(r;i))$. As a consequence, firm $i$ would rule out any such worker $j'$ as a potential blocking partner. As shown in Theorem \ref{Theorem1} and Theorem \ref{Theorem2}, the idea described above leaves  firm $i$ only to consider those workers for potential blocking partners, whose reports lie in between the report $r(\sigma^A_{\tau}(r;i))$ and the true type $w(\sigma^A_{\tau}(r;i))$ of $i$'s matched partner. 

\subsection{Mere Verification of Misreports}

To highlight the role of reports in the strict lower contour sets, we examine the assortative matching mechanism in which each firm receives a private announcement revealing only the type of its matched worker. This mechanism is denoted by $\langle \sigma^A_{\tau}, \alpha^{\triangle}\rangle $ where $\alpha^{\triangle}= \big(\alpha^{\triangle}_i\big)_{i\in I}$ is the collection of announcement rules such that for each profile of reports $r\in T^n$,
\[
\alpha^{\triangle}_i=\begin{cases}
                       r(\sigma^A_{\tau}(r;i)) & \mbox{if } i\neq i_n, \\
                       \emptyset & \mbox{otherwise}.
                     \end{cases}
\]

Clearly, each firm can verify whether its matched worker has reported truthfully. However, unlike under the mechanism $\langle \sigma _{\tau }^{A},\alpha^{L}\rangle $, the firm cannot determine \textit{the position} of that worker's true assigned type relative to the reports in its lower contour set. Our next result demonstrates, via an example, that the matching mechanism $\langle \sigma^A_{\tau}, \alpha^{\triangle}\rangle $ is not incentive compatible when there are at least three agents on each side of the market. We emphasize that this negative result does not arise from a violation of Assumption \ref{NT}; indeed, we assume throughout that the information structure satisfies nontrivial updating.

 \begin{prop}\label{Proposition3}
Let $|I|=|J|\geq 3$. Then the assortative matching mechanism $\langle \sigma^A_{\tau}, \alpha^{\triangle}\rangle$ with associated information structure satisfying nontrivial updating is not incentive compatible.
\end{prop}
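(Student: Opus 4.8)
The plan is to establish the claim by constructing an explicit counterexample, as the statement of the Proposition anticipates. I would first treat the base case $|I|=|J|=3$ and then indicate how to embed it into any market with $n\geq 3$ agents per side. Take firms $i_1,i_2,i_3$ with $s_1>s_2>s_3$, workers $j_1,j_2,j_3$, and a type set $T=\{t_1,t_2,t_3,t_4\}$ with $t_1>t_2>t_3>t_4$. Fix the realized type assignment $w$ with $w(j_1)=t_2$, $w(j_2)=t_3$, $w(j_3)=t_4$, so that the complete-information assortative matching is $\{(i_1,j_1),(i_2,j_2),(i_3,j_3)\}$. Consider the misreport of worker $j_2$ (whose true type $t_3$ is ``second-ranked'') via $r(j_2)=t_1$, so that $r=(t_2,t_1,t_4)$ and $\sigma^A(r)=\{(i_1,j_2),(i_2,j_1),(i_3,j_3)\}$. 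Because $f(i_1)=s_1>s_2=f(i_2)$, the first condition of Definition \ref{defsm} holds, and it remains only to show that the induced matching state is stable, which by Definition \ref{dfIC} immediately yields non-implementability.

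The conceptual heart of the argument is the behaviour of firm $i_1$. Under $\langle\sigma^A_\tau,\alpha^\triangle\rangle$, firm $i_1$ is told only the report $r(j_2)=t_1$ of its own matched worker and observes the true type $w(j_2)=t_3$; in particular it is \emph{not} told $r(j_3)=t_4$. Since $t_1>t_3>t_4=t_L$, firm $i_1$ detects the upward misreport and the ``if'' clause of Assumption \ref{NT} is triggered, so nontrivial updating forces $\Pi_{i_1}\subseteq \mathcal{W}(t_3,t_1)$. I would take the information structure to be the \emph{largest} one compatible with these observations, namely $\Pi_{i_1}=\mathcal{W}(t_3,t_1)=\{w'\in\mathcal{W}\mid w'(j_2)=t_3,\ \exists\,j'\neq j_2\text{ with }t_1\geq w'(j')>t_3\}$, which contains the true $w$ and trivially satisfies Assumption \ref{NT}. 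The key observation is that this set contains a ``scapegoat'' assignment $w^\dagger$ with $w^\dagger(j_1)=t_4$, $w^\dagger(j_2)=t_3$, $w^\dagger(j_3)=t_1$: here the further high type demanded by nontrivial updating is supplied by $j_3$ rather than by $j_1$, while $w^\dagger(j_1)=t_4<t_3=w^\dagger(j_2)$. Because $j_3$'s report is hidden, firm $i_1$ cannot exclude $w^\dagger$, so it is \emph{not} the case that $w'(j_1)>w'(j_2)$ for all $w'\in\Pi_{i_1}$; hence the second bullet of Definition \ref{dfBLOCKING} fails and $i_1$ will not block with $j_1$.

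It then remains to rule out every other firm--worker pair, which is the routine but careful part of the argument. Worker $j_2$ sits at the top firm $i_1$ and so participates in no block; worker $j_1$ (matched to $i_2$) could only block with $i_1$, already excluded; worker $j_3$ (matched to $i_3$) could block only with $i_1$ or $i_2$. For $i_1$ one checks that the true $w\in\Pi_{i_1}$ has $w(j_3)=t_4<t_3$, so $i_1$ is not certain $j_3$ outranks its match. For $i_2$ and $i_3$, which are matched to the truthfully reporting workers $j_1$ and $j_3$, no misreport is detected, Assumption \ref{NT} imposes no restriction, and I would let their information sets be minimally informative, i.e. $\Pi_{i_2}=\{w'\mid w'(j_1)=t_2\}$ and $\Pi_{i_3}=\{w'\mid w'(j_3)=t_4\}$; since each contains assignments in which the prospective blocking worker has a low type, neither firm can be certain and neither blocks. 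Thus the matching state is stable, the misreport is successful, and $\langle\sigma^A_\tau,\alpha^\triangle\rangle$ is not incentive compatible. To extend to $n>3$, I would place $i_1,i_2,i_3$ as the three highest firms, give the remaining $n-3$ workers types strictly below $t_4$ reported truthfully, and verify that they neither create nor absorb blocking pairs.

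The main obstacle — and the step I would treat most carefully — is the joint consistency of the chosen information structure: one must verify simultaneously that $\Pi_{i_1}=\mathcal{W}(t_3,t_1)$ is genuinely compatible with \emph{all} of firm $i_1$'s observations under $\alpha^\triangle$ (so that the scapegoat $w^\dagger$ really is indistinguishable from the truth once $r(j_3)$ is withheld) and that it still satisfies Assumption \ref{NT}. This is precisely the contrast with the AMIPA mechanism of Theorem \ref{Theorem1}: there firm $i_1$ would additionally learn $r(j_3)=t_4<t_3$, which combined with upward reporting forces $w'(j_3)<t_3$ and hence, through nontrivial updating, $w'(j_1)>t_3$ for every $w'\in\Pi_{i_1}$, reinstating the block. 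It is therefore essential to pin down exactly which inference is lost when the strict lower contour set is hidden. A secondary subtlety worth recording is why $|I|\geq 3$ is indispensable: with only two workers, nontrivial updating would force the single remaining worker $j_1$ to be the high type, eliminating the scapegoat and restoring incentive compatibility, so the presence of the third worker $j_3$ as an alternative carrier of the high type is exactly what breaks the mechanism.
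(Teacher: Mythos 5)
Your high-level idea is the same as the paper's: after $j_2$'s upward misreport, exhibit a ``scapegoat'' type assignment inside firm $i_1$'s information set in which $j_3$, not $j_1$, carries the high type, so that $i_1$ fails the certainty requirement of Definition \ref{dfBLOCKING} against both candidates and no block forms. The gap sits exactly at the step you flag as the one to ``treat most carefully'' but never carry out: verifying that your scapegoat $w^\dagger=(t_4,t_3,t_1)$ is genuinely indistinguishable from the truth. It is not. Since it is common knowledge that reports are weakly above true types, under $w^\dagger$ worker $j_3$ (of type $t_1$) must report $t_1$; for $j_3$ nevertheless to land at the bottom firm $i_3$ under $\sigma^A_\tau$, all three workers must report $t_1$ and the tie-breaking rule must order $j_2\rhd j_1\rhd j_3$. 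Under any other $\tau$, firm $i_1$ --- which knows the mechanism, including $\tau$, and observes the matching --- can logically exclude $w^\dagger$. Worse, your four-element $T$ admits no scapegoat supported by a tie-free report profile: any such profile would need $t_1=r'(j_2)>r'(j_1)>r'(j_3)>t_3$ (the last inequality because, once $w'(j_1)=t_4$, membership in $\mathcal{W}(t_3,t_1)$ forces $w'(j_3)>t_3$ and reports are upward), i.e.\ two distinct types strictly between $t_3$ and $t_1$, whereas only $t_2$ exists. Consequently, for a tie-breaking rule such as $j_3\rhd j_1\rhd j_2$, a firm drawing all available inferences concludes $w'(j_1)>t_3$ for every assignment it cannot exclude and blocks with $j_1$, defeating the manipulation. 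Your fallback --- decreeing $\Pi_{i_1}=\mathcal{W}(t_3,t_1)$ outright --- does satisfy the letter of \eqref{eq1} and Assumption \ref{NT}, but it requires the firm to treat as possible assignments it can deduce are impossible; the paper's own proof makes clear that the proposition is about information sets formed by rational inference (the scapegoat must be one the firm ``will not be able to distinguish''), not about beliefs chosen adversarially to kill blocking.

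This is precisely why the paper's example looks different from yours: it takes $|T|=7$ and $w=(t_2,t_4,t_5)$, deliberately leaving $t_3$ unassigned between $w(j_2)=t_4$ and $w(j_1)=t_2$. After the misreport $r(j_2)=t_1$, the scapegoat $w'=(t_5,t_4,t_3)$ is supported by the report profile $(t_2,t_1,t_3)$, in which \emph{both} $j_1$ and $j_2$ misreport while $j_3$ reports truthfully; all reports are distinct, so the same matching and the same $\alpha^{\triangle}$-announcements arise under \emph{every} tie-breaking rule, and both $w$ and $w'$ lie in $\mathcal{W}(t_4,t_1)$, so nontrivial updating is respected. The spare intermediate type, and the use of a scapegoat scenario with two simultaneous misreports (which firms are allowed to contemplate), are the ingredients your construction is missing; with them, indistinguishability is a theorem rather than a stipulation. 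Two minor further points: your $n>3$ embedding is impossible as written, since $T$ contains no types below $t_4$ (enlarge $T$); and your closing observation about why $|I|\geq 3$ is indispensable is correct and consistent with the logic of Theorem \ref{Theorem1}.
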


\begin{proof}
Let $I=\{i_{1},i_{2},i_{3}\}$ and $J=\{j_{1},j_{2},j_{3}\}$. The ordering of the firms according to their types
is commonly known and set to be $f(i_{1})>f(i_{2})>f(i_{3})$. Let $T=\{t_{1},\cdots ,t_{7}\}$ be the set of possible worker types with $t_{1}>t_{2}>\cdots >t_{6}>t_{7}$. Consider now the type assignment $w$ defined as follows: $w(j_1)=t_2$, $w(j_2)=t_4$ and $w(j_3)=t_5$. If every worker reports truthfully, the matching would be $\sigma^A_{\tau}(w)=\{(i_{1},j_{1}),(i_{2},j_{2}),(i_{3},j_{3})\}$. Now suppose that worker $j_2$ contemplates misreporting at type assigment $w$ via $t_1>t_4=w(j_2)$.  From the perspective of worker $j_2$, and given the assumption that every other worker is reporting truthfully, the resulting profile of reports $r$ would be $r(j_1)=t_2$, $r(j_2)=t_1$ and $r(j_3)=t_5$. The resulting matching would then be $\sigma^A_{\tau}(r)=\{(i_{1},j_{2}),\,(i_{2},j_{1}),(i_{3},j_{3})\}$. Each firm $i\neq i_3$ would get to know the report of its matched worker, i.e., $\alpha _{i_{1}}^{\triangle }(r)=\left\{ t_{1}\right\}$, $\alpha_{i_{2}}^{\triangle }(r)=\left\{ t_{2}\right\} $, $\alpha_{i_{3}}^{\triangle }(r)=\emptyset $. Moreover, given that the matching rule is assortative with respect to reports, each firm $i\neq i_3$ would get to infer that $r(j_2)>r(j_1)>r(j_3)$ holds. It is clear that firm $i_1$ observes that it's matched partner $j_2$ has misreported: $w(j_2)=t_4< t_1=r(j_2)$. The proof of the proposition will be complete if we can establish that the matching state  $\big(w,\sigma _{\tau}^{A}(r),\Pi \left( w,\sigma _{\tau }^{A}(r),\alpha ^{\triangle }(r)\right)\big)$ is stable. 

Note that neither firm $i_3$, nor firm $i_2$ can be a part of any blocking pair.  Firm $i_3$ has the lowest possible type. Consequently, firm $i_3$ cannot form any successful block with any worker $j$. Note that firm $i_2$ knows that worker  $j_2$ is matched to the firm with the highest possible type (firm $i_1$). Hence, the only possible blocking pair that firm $i_2$ can be a part of is $(i_2,j_3)$. However, given that $t_2=w(j_1)=r(j_1)>t_3=r(j_3)\geq w(j_3)$, firm $i_2$ will not be willing to form a blocking pair with worker $j_3$. So, the only possible blocking pairs to consider  involve firm $i_1$. 

Now each of the workers $j_1$ and $j_3$ would like to form a blocking pair with firm $i_1$ (since firm $i_1$ is the firm with the highest possible type). On the other hand, firm $i_1$ would form a blocking pair with any worker $j\in \{j_1,j_3\}$ if $i_1$ is \textit{certain} that the assigned type of $j$ is higher than  $w(j_2)=t_4$. With the information structure satisfying nontrivial updating, and given the misreport by worker $j_2$, firm $i_1$ would believe that in all possible type assignments either $j_1$ or $j_3$ (or both) have assigned true type higher than $t_4$. Moreover, firm $i_1$ would only block the matching $\sigma^A_{\tau}(r)$ with some worker $j\in \{j_1,j_3\}$ if for all type assignments $\tilde{w}\in \Pi_{i_1}(w,\sigma^A_{\tau}(r),\alpha^{\triangle}(r))$ it is the case that $\tilde{w}(j)> \tilde{w}(j_2)=w(j_2)=t_4$. In what follows, we demonstrate that $i_1$ will not be able to identify any such worker $j\in \{j_1,j_3\}$. 

For this, recall that firm $i_{1}$ only observes $j_{2}$'s realized type $%
t_{4}$ as well as $j_{2}$'s (mis)report $t_{1}$ to the mechanism; moreover, $%
i_{1}$ can derive the ordering of the reported types. However, $i_{1}$ will
not be able to distinguish between the type assignment $w$ and another type
assignment $w^{\prime }$, provided that workers' reports at $w^{\prime }$
generate the same matching $\sigma _{\tau }^{A}(r)$ as above with $i_{1}$
being matched to $j_{2}$ of true type $t_{4}$ and (mis)report $t_{1}$.

To be precise, consider the following type assignment $w'$ where $w'(j_1)=t_5$, $w'(j_2)=t_4$ and $w'(j_3)=t_3$. Observe that in both type assignments $w$ and $w'$, the assigned type of worker $j_2$ is the same, $w(j_2)=w'(j_2)=t_4$. Further, firm $i_1$ cannot distinguish between the following two situations: (i) worker $j_2$ misreporting at $w$ via $t_1$, and (ii) worker $j_2$ misreporting at $w$ via $t_1$ \textit{and} worker $j_1$ misreporting at $w'$ via $t_2$. The reason here is that the resulting matching in both situations is $\{(i_{1},j_{2}),\,(i_{2},j_{1}),(i_{3},j_{3})\}$ with the mechanism conveying $j_2$'s report $t_1$ to firm $i_1$, $j_1$'s report $t_2$ to firm $i_2$, and providing no additional information to firm $i_3$. That is, 
\[
w,w^{\prime }\in \Pi _{i_{1}}(w,\sigma _{\tau }^{A}(r),\alpha
_{i_{1}}^{\triangle }(r)).
\]

Moreover, in the type assignment $w$, we have  $w(j_{1})=t_{2}>t_{4}>t_{5}=w(j_{3})$ while in  the type assignment $w'$, we have $w^{\prime}(j_{3})=t_{3}>t_{4}>t_{5}=w^{\prime }(j_{1})$. Consequently, given nontrivial updating, even though firm $i_{1}$ knows that at the realized type assignment $w$ either worker $j_{1}$ or worker $j_{3}$ has assigned type
higher than $j_{2}$'s realized type $t_{4}$, it cannot identify the worker who has the higher type. That is, firm $i_{1}$ will not be a part of any blocking pair for the matching state $(w,\sigma _{\tau }^{A}(r),\Pi \left(w,\sigma _{\tau }^{A}(r),\alpha ^{\triangle}(r)\right) )$.   Since $j_{2}$ can perform the above analysis as well as firm $i_{1}$, worker $j_{2}$ can
successfully manipulate the mechanism $\langle \sigma _{\tau }^{A},\alpha
^{\triangle }\rangle $ at the type assignment $w$ via report $r(j_{2})=t_{1}$. Hence, this matching mechanism is not incentive compatible.
\end{proof}
 
 \section{Concluding Remarks}

In this paper, we study a two-sided matching model with one-sided incomplete information. Under a permissive notion of stability, almost any matching can be stable; by contrast, in the complete information case, the positively assortative matching is the unique stable allocation. This observation motivates our mechanism design approach, focusing on the existence of incentive compatible mechanisms that implement the positively assortative matching. It is immediate that, without conveying some information about the possible type distribution, such an allocation cannot be implemented. In our framework, information is conveyed through announcements to firms of their corresponding lower contour sets of reported types. We identify a condition on the information structure - nontrivial updating - under which the assortative matching rule becomes implementable.

A natural question is whether the positively assortative matching allocation can be implemented when both sides of the market have incomplete information. The expected answer is negative, since even in the standard Gale-Shapley model there is a fundamental tension between incentive compatibility on both sides and stability (\citet{Roth1}). We conclude the paper by examining this issue.

In order to formulate the problem appropriately, we assume that firms' types
are also private information and denote by $\mathcal{F}$ the set of all
possible assignments $f:I\rightarrow S$ of firms' types with $\left\vert
S\right\vert >n$. A general type assignment is now a pair $(f,w)$ where $%
f\in \mathcal{F}$ and $w\in \mathcal{W}$. A mechanism then would ask for
reports from both firms and workers. Thus, a generic report profile is $%
r\equiv \left( (r(i))_{i\in I},(r(j))_{j\in J}\right) \in S^{n}\times T^{n}$%
. We consider the case where the matching mechanism generates once again the
(positively) assortative matching with respect to the reported types and
publicly announces the set of all reported types. The corresponding
information structure is denoted by $\Pi ()=\left( (\Pi _{i}())_{i\in
I},(\Pi _{j}())_{j\in J}\right) $. Moreover, we extend the tie-breaking rule $\tau $ to
take care of coincidence of reports on either side of the market and,
slightly abusing notation, denote the corresponding AMIPA mechanism again by
$\langle \sigma _{\tau }^{A},\alpha ^{I}\rangle $. Our final example
establishes that this mechanism fails incentive compatibility in the
considered extended setting.\medskip \newline
\textsc{Example \textbf{6 }}Let $I=\{i_{1},i_{2}\}$ and $J=\{j_{1},j_{2}\}$.
The sets of workers' and firms' types are $T=\{t_{1},t_{2},t_{3},t_{4}\}$
and $S=\{s_{1},s_{2},s_{3},s_{4}\}$, respectively. The types in these sets
are ordered as follows: $t_{1}>t_{2}>t_{3}>t_{4}$ and $%
s_{1}>s_{2}>s_{3}>s_{4}$.

Consider the general type assignment $(f,w)$ where $f(i_{1})=s_{2}$, $%
f\left( i_{2}\right) =s_{3}$, $w(j_{1})=t_{2}$, and $w(j_{2})=t_{3}$. Let $%
r^{0}$ be the truthful report profile at $(f,w)$, that is, $r^{0}(i)=f(i)$
for each $i\in I$ and $r^{0}(j)=w(j)$ for each $j\in J$. With truthful
reports, the assortative matching would be $\sigma _{\tau
}^{A}(r^{0})=\{(i_{1},j_{1}),(i_{2},j_{2})\}$. Now suppose worker $j_{2}$
misreports at $(f,w)$ via $t_{1}$. Then the report profile is $r$, where $%
r(j_{2})=t_{1}$ and for all $k\in \left( I\cup J\right) \setminus \left\{
j_{2}\right\} $, $r(k)=r^{0}(k)$. The matching at the report profile $r$
would be $\sigma _{\tau }^{A}(r)=\{(i_{1},j_{2}),(i_{2},j_{1})\}$. Moreover,
every agent knows the report profile and firm $i_{1}$ knows that its matched
partner $j_{2}$ has misreported. If we additionally impose nontrivial updating on the information structure, firm $i_{1}$ would even know that worker $j_{1}$ is of a type
higher than $w(j_{2})=t_{3}$. Despite this, we will argue below that the
matching state $\left( (f,w),\sigma _{\tau }^{A}(r),\Pi ((f,w),\sigma _{\tau
}^{A}(r),\alpha ^{I}(r))\right) $ is stable.

Notice first that $\left( i_{2},j_{2}\right) $ cannot be a blocking pair,
the reason being that $j_{2}$ knows the true type $f(i_{1})=s_{2}$ of its
matching partner as well as the reported type $r(i_{2})=s_{3}$ of firm $%
i_{2} $. Since the matching rule is assortative with respect to the
reported types and everyone is aware of this fact, the true type of firm $%
i_{2}$ is weakly lower than $s_{3}$ and thus, worker $j_{2}$ will not be
willing to form a blocking pair with $i_{2}$.

Consider now the possibility of $(i_{1},j_{1})$ being a blocking pair and
notice that firm $i_{1}$ would like to form a block with $j_{1}$ for reasons
based on nontrivial updating and elaborated above. Observe that, at the report profile $r$, worker $j_{1}$ knows that her matched firm $%
i_{2}$ has reported truthfully. More importantly, $j_{1}$ has no way to
verify whether $i_{1}$ has reported truthfully or has misreported. Nontrivial updating or an analogous assumption for workers has little impact in this
case. Consider therefore the firms' type assignment $f^{\prime }=\left(
f^{\prime }(i_{1}),f^{\prime }(i_{2})\right) =\left( s_{4},s_{3}\right) $
and note that, as argued above, $(f^{\prime },w)\in \Pi
_{j_{1}}((f,w),\sigma _{\tau }^{A}(r),\alpha _{j_{1}}^{I}(r))$. Hence,
worker $j_{1}$ will not be a part of any blocking pair. This leads to the
conclusion that the matching state $\left( (f,w),\sigma ^{A}(r),\Pi
((f,w),\sigma _{\tau }^{A}(r),\alpha ^{I}(r))\right) $ is stable. Since $%
j_{2}$\ can perform the above analysis as well as $j_{1}$, worker $j_{2}$
can successfully manipulate the mechanism at $(f,w)$ via $t_{1}$.

\newpage
\begin{appendices}
\section*{Appendix}
\section{Proof of Theorem \ref{Theorem1}}\label{Proof-theorem1}

We start with the following additional notions and definitions. Consider a type assignment $w\in \mathcal{W}$ and fix a worker $j\in J$. Let $u(w;j)=|\{k\in J\mid w(k)>w(j)\}|$ be the number
of workers whose realized types are higher than the realized type of worker $%
j$. The \textit{position} of worker $j$ in the type assignment $w$ is $%
p_{j}(w)=u(w;j)+1$. For any $q\in\left\{1,\cdots,n\right\}$, define the set $\mathcal{P}%
(w(j);q)=\{w^{\prime }\in \mathcal{W}\mid w^{\prime }(j)=w(j)\text{ and }%
p_{j}(w)=q\}$. Thus, $\mathcal{P}(w(j);q)$ is the set of type assignments
where worker $j$ of realized type $w(j)$ is at the $q$-th position in $w$.
Note that $\{\mathcal{P}(w(j);q)\}_{q=1}^{n}$ partitions the space of type
assignments. Also note that $w\in \mathcal{P}(w(j);1)$ implies that $j$ has
no reason to misreport at $w$. That is, worker $j$ will consider misreporting only if
she believes that the true type assignment $w$ belongs to $\mathcal{P}%
(w(j);q)$ for some $q>1$. In our proof argument, we will first
demonstrate that, at any $w\in \mathcal{P}(w(j);2)$, worker $j$ cannot
successfully manipulate. We then complete our argument by induction on $q$.
Notice therefore that, in our formulation of the problem, a successful
manipulation by any worker $j$ only pertains to improving the worker's
position in the report profile. Such a manipulation is possible only because
worker $j$ at the time of reporting assumes every other worker is reporting
truthfully, and this is commonly known. In order to make matters clear, we
begin with the following definitions.

\begin{defn}
Let $k\in \left\{ 1,2,\cdots,n-1\right\} $. We say that worker $j\in J$ can \textbf{potentially swap-$k$ manipulate} an assortative matching mechanism $\langle \sigma ^{A},\alpha \rangle $ at a type assignment $w\in \mathcal{W}$  via  $t\in T$ if $u((w_{-j},t);j)=u(w;j)-k$. We say that a potential swap-$k$ manipulation is \textbf{successful}, if the matching state $\left( w,\sigma^{A}(w_{-j},t),\Pi (w,\sigma ^{A}(w_{-j},t),\alpha (w_{-j},t))\right) $  is stable.
\end{defn}

\begin{defn}\label{defn-nonmani}
   Fix $\left( p,k\right) $ with $
p\in \{ 2,\cdots ,n\} $ and $k\in \{ 1,\cdots
,p-1\} $. We say that an assortative matching mechanism $\langle \sigma ^{A},\alpha \rangle $ is

\begin{itemize}
  \item [-]  $( p,k) $\textbf{-manipulable} at a realized
type assignment $w \in \mathcal{W} $ if some worker $j\in J$ with
$p_{j}(w )=p$ can successfully swap-$k$ manipulate
at $w$.
  \item [-] $ (p,k)$\textbf{-non-manipulable} if for all
realized type assignments $w \in \mathcal{W} $ and for all $k\leq
p-1 $, no worker $j\in J$  with  $p_{j}(w )=p$
can successfully swap-$k$ manipulate at $w $.
\end{itemize}
\end{defn}

In view of Definition \ref{defn-nonmani}, we can say that an assortative matching mechanism $\langle \sigma ^{A},\alpha \rangle $ is incentive compatible, if it is $(p,k)$-non-manipulable for every $p\in \{2,\cdots,n\}$. We plan to demonstrate that the AMIPA mechanism $\langle \sigma^A,\alpha^I\rangle$ is incentive compatible by showing that $\langle \sigma^A,\alpha^I\rangle$ is $(p,k)$-non-manipulable for every $p$.

\medskip 
\noindent
{\sc Proof of Theorem \ref{Theorem1}:} 
We proceed by induction. More precisely, we fix the AMIPA mechanism $\langle \sigma ^{A},\alpha
^{I}\rangle $ and first show that it is $\left( 2,1\right) $-non-manipulable
(Statement A). In the Induction Hypothesis we assume that for all $%
p<p^{\prime }$ the mechanism is $\left( p,k\right) $-non-manipulable and
then show (Statement B) that the mechanism is $\left( p^{\prime },k\right) $%
-non-manipulable. 

\vskip 0.2cm 
\noindent
\textsc{Statement A}\textit{\hspace{0.2 cm}The AMIPA mechanism $\langle\sigma^A,\alpha^I\rangle$ is $(2,1)$-non-manipulable.}

\medskip 
\noindent
\begin{proof}
Note that by the definition of the AMIPA mechanism firms
know the entire report profile. In other words, at report profile $r\in
T^{n} $, each firm $i\in I$ gets to observe the report $r(j)$ by each worker 
$j\in J$. Since in the environment we consider in Theorem 1, $\mathcal{R}%
\subseteq \mathcal{W}$, each firm $i\in I$ is thus observing the \textit{%
identity} of the worker $j$ whose reported type is $r(j)$. Additionally,
each firm $i\in I $ gets also to observe the true type of its matched
partner $w(\sigma ^{A}(r;i))$ and is aware of the fact that $r(j)\geq w(j)$
holds for each worker $j\in J$. Notice that the last inequality immediately
follows from the assortativeness of the commonly known matching rule.

We now proceed with the proof of Statement A. Let $w\in \mathcal{W}$ be a
type assignment and let $j$ be the second ranked worker in $w$, i.e., $%
p_{j}(w)=2$ and thus, $w\in \mathcal{P}(w(j);2)$. Note that a $(2,1)$%
-manipulation is meaningful only in this scenario. Without loss of
generality, let $j=j_{2}$ and $w=(t_{1},t_{2},\cdots ,t_{n})$ with $t_{\ell
}=w(j_{\ell })$ be such that $t_{1}>t_{2}>\cdots >t_{n}$. Suppose that
worker $j_{2}$ is contemplating a manipulation at $w$ via $t>t_{1}$.%
\footnote{%
If there does not exist such a $t>t_{1}=w(j_{1})$, then Statement A is
trivially satisfied.} We will show that such a misreport will not be
successful.

Given our notion of incentive compatibility, worker $j_{2}$ reports $%
r(j_{2})=t$ under the assumption that every other worker is reporting
truthfully. So, from the perspective of $j_{2}$, the report profile is $%
r=(w_{-j_{2}},t)=(t_{-2},t)$. For the matching $\sigma ^{A}(t_{-2},t)$, we
get $(i_{1},j_{2}),(i_{2},j_{1})$ and $(i_{\ell },j_{\ell })$ for all $\ell
\in \{3,\cdots ,n\}$. Given worker $j_{2}$'s report $r(j_{2})=t$ and the
matching, firm $i_{1}$ would know the entire report profile $r$ and worker $%
j_{2}$'s true type $w(j_{2})=t_{2}$. Thus, firm $i_{1}$ observes the profile
of reports $r=(t_{-2},t)$ and that it's matched partner $j_{2}$ has
misreported via $t$ when $j_{2}$'s true type is $w(j_{2})$. Firm $i_{1}$
does not know the true type assignment but knows that each worker $j$, when
making her report, assumes every other worker is reporting truthfully. In
particular, $j_{2}$ has misreported via $t$ assuming $r_{-j_{2}}=t_{-2}$ as
the true type assignment of all other workers.

Let us now consider the information set $\Pi _{i_{1}}(w,\sigma
^{A}(r),\alpha _{i_{1}}^{I}(r))$ of firm $i_{1}$ and any type assignment $%
w^{\prime }$ belonging to that set. Recall that $w^{\prime
}(j_{2})=t_{2}=w(j_{2})$ and as noted above, for all $\ell \in \{3,\cdots
,n\}$,%
\begin{equation*}
t_{2}=w^{\prime }(j_{2})>t_{\ell }\geq w^{\prime }(j_{\ell }).
\end{equation*}%
In addition, $w^{\prime }$ is a possible realized type assignment from the
point of view of $i_{1}$. Firm $i_{1}$ can therefore conclude that%
\begin{equation*}
w(j_{2})>w(j_{\ell })\geq w^{\prime }(j_{\ell })\text{ for all }\ell \in
\{3,\cdots ,n\}.
\end{equation*}%
It only remains to consider $w^{\prime }(j_{1})$. We make the following
claim.\medskip \newline
\textsc{Claim 1 }$w^{\prime }(j_{1})>t_{2}=w^{\prime }(j_{2})$.\medskip 
\newline
\noindent \textit{Proof of Claim }1. Suppose, contrary to the claim, that $%
t_{2}>w^{\prime }(j_{1})$ holds. Then we have $t>t_{2}=w^{\prime
}(j_{2})>w^{\prime }(j_{1})$. This implies that worker $j_{2}$ has the
highest assigned type in $w^{\prime }$. At such a profile $w^{\prime }$, it
is not worthwhile for $j_{2}$ to misreport via $t$. Given that $j_{2}$ has
indeed misreported, $w^{\prime }\notin \mathcal{W}%
(w(j_{2}),r(j_{2})=t)$ must be the case. Nontrivial updating then requires
$\Pi _{i_{1}}(w,\sigma ^{A}(r),\alpha _{i_{1}}^{I}(r))\subseteq \mathcal{W}%
(w(j_{2}),r(j_{2})=t)$ and thus, $w^{\prime }\notin \Pi
_{i_{1}}(w,\sigma ^{A}(r),\alpha _{i_{1}}^{I}(r))$ holds. This contradicts our starting point $w^{\prime }\in \Pi _{i_{1}}(w,\sigma ^{A}(r),\alpha
_{i_{1}}^{I}(r))$ and completes the proof of the claim. 

\medskip 
\noindent 
In view of the above claim, we have that for all $w^{\prime }\in
\Pi _{i_{1}}(w,\sigma ^{A}(r),\alpha _{i_{1}}^{I}(r))$, it must be the case
that $w^{\prime }(j_{1})>t_{2}=w^{\prime }(j_{2})=w(j_{2})$. Recalling that $j_{1}$ is matched to $i_{2}$ and $s_{1}>s_{2}$, we conclude that $(i_{1},j_{1})$ is a blocking pair. As $j_{2}$ can perform the above analysis
as well as firm $i_{1}$, swap-1 manipulation by $j_{2}$ at $w$ is not successful. Since $w$ and $j_{2}$ were arbitrarily chosen, with the only requirement that the assigned type of $j_{2}$ is second-ranked in $w$, this
establishes that the AMIPA mechanism $\langle \sigma ^{A},\alpha ^{I}\rangle$ is $(2,1)$-non-manipulable and completes our argument for Statement A. 
\end{proof}

\noindent 
\textsc{Statement B}\hspace{0.2 cm}(\textbf{Induction Hypothesis}) \textit{Fix a $p'\in \{3,\cdots,n\}$. Suppose that the AMIPA mechanism $\langle \sigma^A,\alpha^I\rangle$ is $(p,k)$-non-manipulable for all $p<p'$. Then $\langle \sigma^A,\alpha^I\rangle$ is $(p',k)$-non-manipulable. }

\medskip 
\begin{proof}
Notice first that, in the definition of $(p,k)$%
-non-manipulability, the variable $k$ runs from $1$ to $p$. Hence, it is
important to remember that the $k$'s in the antecedent and the conclusion in
Statement B, i.e., in $(p,k)$ and $(p^{\prime },k)$, are different - they
vary over sets of different cardinality.

Fix a $p^{\prime }\in \{3,\cdots ,n\}$ and suppose that $\langle \sigma
^{A},\alpha ^{I}\rangle $ is $(p,k)$-non-manipulable for all $p<p^{\prime }$%
. We will argue that, for no type assignment $w\in \mathcal{W}$ and $k\leq
p^{\prime }-1$, there is some worker $j$ who is at position $p^{\prime }$ in 
$w$ and can perform a successful swap-$k$ manipulation.

We will complete the argument in two steps. In Step 1 we establish that,
for any $k<p^{\prime }-1$, there does not exist a type assignment $w\in 
\mathcal{W}$ and a worker $j$ at position $p^{\prime }$ such that $j$ can
perform a successful swap-$k$ manipulation. In Step 2, we establish the same
for $k=p^{\prime }-1$. Although the arguments in the two cases are very
similar, there is one crucial case that is present in Step 1 and not in Step
2 that we would like to indicate.\medskip \newline
\noindent \textsc{Step 1 }In this step we establish that the mechanism $%
\langle \sigma ^{A},\alpha ^{I}\rangle $ is $(p,k)$-non-manipulable where $%
k\in \{1,\cdots ,p^{\prime }-2\}$. Suppose to the contrary that there exists
a type assignment $w\in \mathcal{W}$ and a $k\leq p^{\prime }-2$ such that
worker $j$ with $p_{j}(w)=p^{\prime }$ can successfully swap-$k$ manipulate
at $w$ via $t\in T$. Without loss of generality, let $j=j_{p^{\prime }}$ and 
$w=(t_{1},t_{2},\cdots ,t_{n})$ be such that $t_{\ell }=w(j_{\ell })$ for
all $\ell \in \left\{ 1,\cdots ,n\right\} $ and%
\begin{equation*}
t_{1}>t_{2}>\cdots >t_{p^{\prime }-k}>\cdots >t_{p^{\prime }-1}>t_{p^{\prime
}}>\cdots >t_{n}.
\end{equation*}%
Suppose that, at the type assignment $w$, worker $j_{p^{\prime }}$ of type $%
w(j_{p^{\prime }})$ is contemplating misreporting via $t>t_{p^{\prime
}-k}=w(j_{p^{\prime }-k})$. We will show that such a misreporting will not
be successful.

As noted above, worker $j_{p^{\prime }}$ reports $r(j_{p^{\prime }})=t$
under the assumption every other worker is reporting truthfully. So, from
the perspective of $j_{p^{\prime }}$, the profile of reports is $%
r=(w_{-j_{p^{\prime }}},t)=(t_{-p^{\prime }},t)$. The matching $\sigma
^{A}(r)$ would then be%
\begin{equation*}
\begin{split}
(i_{p^{\prime }-k},j_{p^{\prime }})& \\
(i_{p^{\prime }-k^{\prime }+1},j_{p^{\prime }-k^{\prime }})& \quad \text{for
each }k^{\prime }\in \{1,2,\cdots ,k\} \\
(i_{\ell },j_{\ell })& \quad \text{for each }\ell \in \{1,2,\cdots
,p^{\prime }-k,p^{\prime }+1,\cdots ,n\}.
\end{split}%
\end{equation*}%
Given the matching, the misreport $r(j_{p^{\prime }})=t$ of worker $%
j_{p^{\prime }}$, and this worker's true type $w(j_{p^{\prime }})$, firm $%
i_{p^{\prime }-k}$ observes the report profile $r$, the true type $%
w(j_{p^{\prime }})$ of its matched partner $j_{p^{\prime }}$, and that $%
j_{p^{\prime }}$ has misreported via $t$. Firm $i_{p^{\prime }-k}$ does not
know the true type assignment, but knows that each worker has reported
assuming all other workers have reported truthfully. In particular, worker $%
j_{p^{\prime }}$ has reported assuming $t_{-p^{\prime }}$ as the true type
assignment for all other workers.

Consider the information set $\Pi _{i_{p^{\prime }-k}}(w,\sigma
^{A}(r),\alpha _{i_{p^{\prime }-k}}^{I}(r))$ of firm $i_{p^{\prime }-k}$ and
any type assignment $w^{\prime }$ in that set. Recall that $w^{\prime
}(j_{p^{\prime }})=w(j_{p^{\prime }})$ and as noted above, for all $\ell
\geq p^{\prime }+1$,%
\begin{equation*}
t_{p^{\prime }}=w^{\prime }(j_{p^{\prime }})>t_{\ell }\geq w^{\prime
}(j_{\ell }).
\end{equation*}%
Let $G=\{j_{1},j_{2},\cdots ,j_{p^{\prime }-1}\}=\{j_{\ell }\in J\mid
t_{\ell }>w^{\prime }(j_{p^{\prime }})=w(j_{p^{\prime }})\}$. In other
words, $G$ is the set of workers who have made reports $r(j_{\ell })$ that
are strictly higher than the \textit{true} type of worker $j_{p^{\prime }}$.
First we argue that for any $w^{\prime }\in \Pi _{i_{p^{\prime
}-k}}(w,\sigma ^{A}(r),\alpha _{i_{p^{\prime }-k}}^{I}(r))$ it can never be
the case that $t>w^{\prime }(j_{p^{\prime }})=t_{p^{\prime }}>w^{\prime }(j)$
for all $j\in G$. Suppose to the contrary there exists a $w^{\prime }$ with $%
t>w^{\prime }(j_{p^{\prime }})=t_{p^{\prime }}>w^{\prime }(j)$ for all $j\in
G$. At such a profile, worker $j_{p^{\prime }}$ has the highest assigned
type. Then it is not worthwhile for $j_{p^{\prime }}$ to misreport via $t$.
Since $j_{p^{\prime }}$ has indeed misreported, the argument above leads to
the conclusion that $w^{\prime }\notin \mathcal{W}(w(j_{p^{\prime
}}),r(j_{p^{\prime }})=t)$ should be the case. By nontrivial updating, $w^{\prime }\notin \Pi _{i_{p^{\prime }-k}}(w,\sigma
^{A}(r),\alpha _{i_{p^{\prime }-k}}^{I}(r))$ follows.

In view of the above argument, we can conclude that for any type assignment $%
w^{\prime }\in \Pi _{i_{p^{\prime }-k}}(w,\sigma ^{A}(r),\alpha
_{i_{p^{\prime }-k}}^{I}(r))$ there exists $G_{1}\subset G$ such that for
all $j_{\ell }\in G_{1}$, $w^{\prime }(j_{\ell })>w^{\prime }(j_{p^{\prime
}})=w(j_{p^{\prime }})$. Let $\bar{G}_{1}=G\setminus G_{1}$. That is, $\bar{G%
}_{1}$ is the set of workers in $G$ whose assigned types in the type
assignment $w^{\prime }$ are lower than the true type of $j_{p^{\prime }}$.
In other words, for all $j\in \bar{G}_{1}$, 
\begin{equation*}
w(j_{p^{\prime }})>w^{\prime }(j).
\end{equation*}%
Let us consider now the worker $j_{p^{\prime }-1}\in G$. Notice that in the
report profile $r$, the report $r(j_{p^{\prime }-1})=t_{p^{\prime }-1}$ by
worker $j_{p^{\prime }-1}$ is the lowest possible report that is higher than
the true type of $j_{p^{\prime }}$. There are two mutually exclusive and
exhaustive cases to consider here:

A. $j_{p^{\prime }-1}\notin G_{1}$, and

B. $j_{p^{\prime }-1}\in G_{1}$.

We will argue that Case A is not possible.\medskip \newline
\textsc{Case A: }$j_{p^{\prime }-1}\notin G_{1}$: Let $|\bar{G}%
_{1}|=g_{1}\geq 1$. Then $|G_{1}|=p^{\prime }-(g_{1}+1)$. Notice that, at
such a type assignment $w^{\prime }$, the position of $j_{p^{\prime }}$ in $%
w^{\prime }$ is $p_{j_{p^{\prime }}}(w^{\prime })=p^{\prime
}-(g_{1}+1)<p^{\prime }$. There are two subcases to consider:\medskip

\noindent {\textsc{Subcase A}}(i):\textsc{\ }For all $j\in G_{1}$,%
\begin{equation*}
w^{\prime }(j)>t>w^{\prime }(j_{p^{\prime }})=w(j_{p^{\prime }}).
\end{equation*}%
Observe that at such a type assignment $w^{\prime }$, it is not
worthwhile for $j_{p^{\prime }}$ to misreport via $t$. Since $j_{p^{\prime
}} $ has indeed misreported, it must be that $w^{\prime }\notin \mathcal{W}%
(w(j_{p^{\prime }}),r(j_{p^{\prime }})=t)$ and hence, $w^{\prime }\notin \Pi
_{i_{p^{\prime }-k}}(w,\sigma ^{A}(r),\alpha _{i_{p^{\prime }-k}}^{I}(r))$
follows by nontrivial updating.\textbf{\medskip 
\newline
}\noindent {\textsc{Subcase A}(ii):} There exists $G_{2}\subset G_{1}$ such
that for all $j\in G_{2}$,%
\begin{equation*}
t>w^{\prime }(j)>w^{\prime }(j_{p^{\prime }})=w(j_{p^{\prime }}).
\end{equation*}%
As noted before, $p_{j_{p^{\prime }}}(w^{\prime })=p^{\prime
}-(g_{1}+1)<p^{\prime }$. Let $p^{\ast }=p_{j_{p^{\prime }}}(w^{\prime })$
and $k^{\ast }=|G_{2}|$. Note that $k^{\ast }\leq p^{\ast }$. This implies
that the misreport of $j_{p^{\prime }}$ at such a type assignment $w^{\prime
}$ is a $(p^{\ast },k^{\ast })$-manipulation by worker $j_{p^{\prime }}$.
However, by the \textquotedblleft if\textquotedblright\ part of the
Induction Hypothesis, the mechanism $\langle \sigma ^{A},\alpha ^{I}\rangle $
is $(p^{\ast },k^{\ast })$-non-manipulable contradicting our previous
statement. Hence, $w^{\prime }\notin \Pi _{i_{p^{\prime }-k}}(w,\sigma
^{A}(r),\alpha _{i_{p^{\prime }-k}}^{I}(r))$. This only leaves us with Case
B.\medskip \newline
\noindent \textsc{Case B: }For all $w^{\prime }\in \Pi _{i_{p^{\prime
}-k}}(w,\sigma ^{A}(r),\alpha _{i_{p^{\prime }-k}}^{I}(r))$,%
\begin{equation*}
w^{\prime }(j_{p^{\prime }-1})>w^{\prime }(j_{p^{\prime }}).
\end{equation*}%
Recalling that $j_{p^{\prime }-1}$ is matched to $i_{p^{\prime }}$ and $%
s_{p^{\prime }-k}>s_{p^{\prime }}$, we conclude that such a misreport by $%
j_{p^{\prime }}$ will lead to $(i_{p^{\prime }-k},j_{p^{\prime }-1})$
forming a blocking pair. As $j_{p^{\prime }}$ can perform the above analysis
as well as firm $i_{p^{\prime }-k}$, a swap-$k$ manipulation by worker $%
j_{p^{\prime }}$ with $k<p^{\prime }-1$ is not successful.\medskip \newline
\noindent \textsc{Step 2 }In view of Step 1, we only need to establish that
the mechanism $\langle \sigma ^{A},\alpha ^{I}\rangle $ is $(p,k)$%
-non-manipulable where $k=p^{\prime }-1$. Suppose to the contrary that there
exists a type assignment $w\in \mathcal{W}$ such that worker $j$ with $%
p_{j}(w)=p^{\prime }$ has a successful swap-$k$ manipulation at $w$ via $%
t\in T$. Without loss of generality, let $j=j_{p^{\prime }}$ and $%
w=(t_{1},t_{2},\cdots ,t_{n})$ be such that $t_{\ell }=w(j_{\ell })$ for all 
$\ell \in \left\{ 1,\cdots ,n\right\} $ and%
\begin{equation*}
t_{1}>t_{2}>\cdots >t_{p^{\prime }-k}>\cdots >t_{p^{\prime }-1}>t_{p^{\prime
}}>\cdots >t_{n}.
\end{equation*}%
Suppose that at the type assignment $w$, worker $j_{p^{\prime }}$ with type $%
w(j_{p^{\prime }})$ is contemplating misreporting via $t>t_{p^{\prime
}-k}=w(j_{p^{\prime }-k})$. We will show that such a misreporting will not
be successful.

As noted above, worker $j_{p^{\prime }}$ reports $r(j_{p^{\prime }})=t$
under the assumption every other worker is reporting truthfully. So, from
the perspective of $j_{p^{\prime }}$, the profile of reports is $%
r=(w_{-j_{p^{\prime }}},t)=(t_{-p^{\prime }},t)$. The matching $\sigma
^{A}(r)$ would be%
\begin{equation*}
\begin{split}
(i_{1},j_{p^{\prime }})& \\
(i_{2},j_{1})& \\
\vdots & \\
(i_{p^{\prime }},j_{p^{\prime }-1})& \\
(i_{\ell },j_{\ell })& \quad \text{for each }\ell \in \{p^{\prime }+1,\cdots
,n\}.
\end{split}%
\end{equation*}%
Given the matching, the misreport $r(j_{p^{\prime }})=t$ of worker $%
j_{p^{\prime }}$, and that worker's true type $w(j_{p^{\prime }})$, firm $%
i_{1}$ observes the report profile $r$, the true type $w(j_{p^{\prime }})$
of its matched partner $j_{p^{\prime }}$, and that $j_{p^{\prime }}$ has
misreported via $t$. Firm $i_{1}$ does not know the true type assignment,
but knows that each worker has reported assuming all other workers have
reported truthfully. In particular, worker $j_{p^{\prime }}$ has reported
assuming $t_{-p^{\prime }}$ as the true type assignment for all other
workers.

Consider now the information set $\Pi _{i_{1}}(w,\sigma ^{A}(r),\alpha
_{i_{1}}^{I}(r))$ of firm $i_{1}$ and any type assignment $w^{\prime }$
belonging to that set. Recall that $w^{\prime }(j_{p^{\prime
}})=w(j_{p^{\prime }})$ and as noted above, for all $\ell \geq p^{\prime }+1$%
,%
\begin{equation*}
t_{p^{\prime }}=w^{\prime }(j_{p^{\prime }})>t_{\ell }\geq w^{\prime
}(j_{\ell }).
\end{equation*}%
Let $G=\{j_{1},j_{2},\cdots ,j_{p^{\prime }-1}\}=\{j_{\ell }\in J\mid
t_{\ell }>w^{\prime }(j_{p^{\prime }})=w(j_{p^{\prime }})\}$. In other
words, $G$ is the set of workers who have made reports $r(j_{\ell })$ that
are strictly higher than the \textit{true} type of worker $j_{p^{\prime }}$.
Analogous to Step 1, we can argue that for any $w^{\prime }\in \Pi
_{i_{1}}(w,\sigma ^{A}(r),\alpha _{i_{1}}^{I}(r))$ it can never be the case
that $t>w^{\prime }(j_{p^{\prime }})=t_{p^{\prime }}>w^{\prime }(j)$ for all 
$j\in G$. In view of this, we can conclude that for any type assignment $%
w^{\prime }\in \Pi _{i_{1}}(w,\sigma ^{A}(r),\alpha _{i_{1}}^{I}(r))$ there
exists a $G_{1}\subseteq G$ such that for all $j_{\ell }\in G_{1}$, $%
w^{\prime }(j_{\ell })>w^{\prime }(j_{p^{\prime }})=w(j_{p^{\prime }})$. Let 
$\bar{G}_{1}=G\setminus G_{1}$. That is, $\bar{G}_{1}$ is the set of workers
in $G$ whose assigned types in the type assignment $w^{\prime }$ are lower
than the true type of $j_{p^{\prime }}$. In other words, for all $j\in \bar{G%
}_{1}$, 
\begin{equation*}
w(j_{p^{\prime }})>w^{\prime }(j).
\end{equation*}%
As in Step 1, let us consider now the worker $j_{p^{\prime }-1}\in G$.
Notice that in the report profile $r$, the report $r(j_{p^{\prime
}-1})=t_{p^{\prime }-1}$ by worker $j_{p^{\prime }-1}$ is the lowest
possible report that is higher than the true type of $j_{p^{\prime }}$.
There are two mutually exclusive and exhaustive cases to consider here:

A. $j_{p^{\prime }-1}\notin G_{1}$, and

B. $j_{p^{\prime }-1}\in G_{1}$.

Our argument for this step will be complete if we show that Case A is not
possible.\medskip \newline
\noindent \textsc{Case A: }$j_{p^{\prime }-1}\notin G_{1}$: Let $|\bar{G}%
_{1}|=g_{1}\geq 1$. Then $|G_{1}|=p^{\prime }-(g_{1}+1)$. Notice that, at
such a type assignment $w^{\prime }$, the position of $j_{p^{\prime }}$ in $%
w^{\prime }$ is $p_{j_{p^{\prime }}}(w^{\prime })=p^{\prime
}-(g_{1}+1)<p^{\prime }$. Since worker $j_{p^{\prime }}$ has made the
highest report $t$\footnote{%
Notice that subcase A(i) in the previous Step 1 does not apply here.}, we
have for all $j\in G_{1}$,%
\begin{equation*}
t>w^{\prime }(j)>w^{\prime }(j_{p^{\prime }})=w(j_{p^{\prime }}).
\end{equation*}%
Let $p^{\ast }=p_{j_{p^{\prime }}}(w^{\prime })$, $k^{\ast }=|G_{1}|$, and
note that $k^{\ast }\leq p^{\ast }$. By $p_{j_{p^{\prime }}}(w^{\prime
})=p^{\prime }-(g_{1}+1)<p^{\prime }$, this implies that the misreport of $%
j_{p^{\prime }}$ at such a type assignment $w^{\prime }$ is a $(p^{\ast
},k^{\ast })$-manipulation by worker $j_{p^{\prime }}$. However, by the
\textquotedblleft if \textquotedblright\ part of the Induction Hypothesis,
the mechanism $\langle \sigma ^{A},\alpha ^{I}\rangle $ is $(p^{\ast
},k^{\ast })$-non-manipulable contradicting our previous statement. Hence, $%
w^{\prime }\notin \Pi _{i_{1}}(w,\sigma ^{A}(r),\alpha _{i_{1}}^{I}(r))$.
This only leaves us with Case B.\medskip \newline
\noindent \textsc{Case B: }For all $w^{\prime }\in \Pi _{i_{1}}(w,\sigma
^{A}(r),\alpha _{i_{1}}^{I}(r))$,%
\begin{equation*}
w^{\prime }(j_{p^{\prime }-1})>w^{\prime }(j_{p^{\prime }}).
\end{equation*}%
Recalling that $j_{p^{\prime }-1}$ is matched to $i_{p^{\prime }}$ and $%
s_{1}>s_{p^{\prime }}$, we conclude that such a misreport by $j_{p^{\prime
}} $ will lead to $(i_{p^{\prime }-k},j_{p^{\prime }-1})$ forming a blocking
pair. As $j_{p^{\prime }}$ can perform the above analysis as well as firm $%
i_{p^{\prime }-k}$, a swap-$k$ manipulation by worker $j_{p^{\prime }}$ with 
$k=p^{\prime }-1$ is not successful.\medskip

\noindent Since $w$ was arbitrarily chosen, this concludes our argument for
Statement B.
\end{proof}

Statement A and Statement B complete the proof of Theorem \ref{Theorem1}. 

\section{Proof of Theorem \ref{Theorem2}}\label{Proof-theorem2} 

\begin{proof}
Recall that in the environment we consider in Theorem \ref{Theorem2}, $\mathcal{R}=T^n$ holds. Let $\tau _{0}:\{1,\cdots ,n\}\rightarrow \{1,\cdots ,n\}$ be the \textit{%
identity} permutation, that is $\tau _{0}(k)=k$ for each $k\in \{1,\cdots
,n\}$. Given $T=\{t_{1},\cdots ,t_{L}\}$, we define the \textit{expanded}
type set $T(\tau _{0})$ by $T(\tau _{0})=\{t_{\ell ,k}\}_{{\ell \in
\{1,\cdots ,L\}},{k\in }\left\{ 1,\cdots ,n\right\} }$ where

\begin{itemize}
\item for each $\ell \in \{1,\cdots ,L\}$ and for all $j,k\in \{1,\cdots
,n\} $ with $j<k$, 
\begin{equation*}
t_{\ell ,j}>t_{\ell ,k};
\end{equation*}

\item for all $\ell ,m\in \{1,\cdots ,L\}$ with $\ell <m$ and for all $%
j,k\in \{1,\cdots ,n\}$, 
\begin{equation*}
t_{\ell ,j}>t_{m,k}.
\end{equation*}
\end{itemize}

Consider now the AMIPA mechanism $\langle \sigma _{\tau }^{A},\alpha
^{I}\rangle $ and pick a type assignment $w\in \mathcal{W}$. Given the
ordering $\tau $ fixed in the mechanism, we define the \textit{expanded}
type assignment $w^{\tau }:J\rightarrow T(\tau _{0})$ as follows: for each $%
k\in \{1,\cdots ,n\}$,%
\[
\lbrack w(j_{k})=t_{\ell }]\Leftrightarrow \lbrack w^{\tau }(j_{k})=t_{\ell
,\tau (k)}].
\]
We denote by $\mathcal{W}(\tau )$ the set of all such expanded type assignments.

Let $r\in T^{n}$ be a report profile where there are exactly two workers $%
j_{1}$ and $j_{2}$ who have made the same announcement $t_{\ell }$, i.e., $%
r(j_{1})=r(j_{2})=t_{\ell }$. Moreover, let $\tau $ be such that $\tau
(2)\rhd \tau (1)$. As we will make clear now, a report profile $r$ where
exactly two reports coincide is the only form of report coincidence that is
relevant in our model. Note that, at the time of making a report, any worker 
$j$ assumes that every other worker is reporting truthfully. Consequently,
from the perspective of worker $j$, the remainder of the report profile $%
r_{-j}$ will be a valid type assignment for the workers in $J\setminus \{j\}$, i.e., $r(j_{\ell })\neq r(r_k)$ for all $j_{\ell }\neq j_k\neq j$.  Hence, it suffices to consider report profiles where exactly two workers
make the same report. For such a report profile $r$, we define the \textit{augmented} report profile $r^{\tau }$ as follows: for each $j\in J$,
\[
\lbrack r(j_{k})=t_{\ell }]\Leftrightarrow \lbrack r^{\tau }(j_{k})=t_{\ell
,\tau (k)}].
\]
Note that there is a one-to-one correspondence between $r$ and $r^{\tau }$.
Moreover, $r^{\tau }$ is a valid type assignment, i.e., $r^{\tau }\in \mathcal{W}(\tau )$. Suppose now, at the type assignment $w\in \mathcal{W}$, worker $j_{2}$ contemplates misreporting via some $%
t_{\ell }=w(j_{1})=r(j_{1})>w(j_{2})$. As mentioned before, $w$ corresponds
to the unique expanded type assignment $w^{\tau }\in \mathcal{W}(\tau )$. We
now consider the \textit{auxiliary problem} where at the type assignment $w^{\tau }$%
, worker $j_{2}$ is misreporting via $t_{\ell ,\tau (2)}$. This leads to the
augmented report profile $r^{\tau }$ defined above. Furthermore, this
expanded report profile $r^{\tau }$ is a valid type assignment in $\mathcal{W%
}(\tau )$ that \textit{uniquely} corresponds to the report profile $r$. In the auxiliary problem, by
Theorem \ref{Theorem1}, worker $j_{2}$ cannot successfully
misreport at $w^{\tau }$ via $t_{\ell ,\tau (2)}$. Since the original problem and the auxiliary problem are \textit{isomorphic} to each other, worker $j_{2}$ cannot successfully manipulate the AMIPA mechanism $%
\langle \sigma _{\tau }^{A},\alpha ^{I}\rangle $ at $w$ via $t$, either. 
\end{proof}

\end{appendices}
\newpage


\begin{thebibliography}{99}
\newcommand{\enquote}[1]{``#1''}
\expandafter\ifx\csname natexlab\endcsname\relax\def\natexlab#1{#1}\fi

\bibitem[\protect\citeauthoryear{Bernheim}{Bernheim}{1984}]{Bernheim}
\textsc{Bernheim, B.} (1984): \enquote{Rationalizable strategic behavior,}
  \emph{Econometrica}, 52, 1007--1028.

\bibitem[\protect\citeauthoryear{Bikchandani}{Bikchandani}{2017}]{Bik1}
\textsc{Bikchandani, S.} (2017): \enquote{Stability with one-sided incomplete
  information,} \emph{Journal of Economic Theory}, 168, 372--399.

\bibitem[\protect\citeauthoryear{Chakraborty, Citanna, and
  Ostrovsky}{Chakraborty et~al.}{2010}]{chak}
\textsc{Chakraborty, A., A.~Citanna, and M.~Ostrovsky} (2010):
  \enquote{Two-sided matching with interdependent values,} \emph{Journal of
  Economic Theory}, 145, 85--105.

\bibitem[\protect\citeauthoryear{Chen and Hu}{Chen and Hu}{2023}]{Chen1}
\textsc{Chen, Y.-C. and G.~Hu} (2023): \enquote{A theory of stability in
  matching with incomplete information,} \emph{American Economic Journal:
  Microeconomics}, 15, 288--322.

\bibitem[\protect\citeauthoryear{Ehlers and Massó}{Ehlers and
  Massó}{2007}]{Ehlers}
\textsc{Ehlers, L. and J.~Massó} (2007): \enquote{Incomplete information and
  singleton cores in matching markets,} \emph{Journal of Economic Theory}, 136,
  587--600.

\bibitem[\protect\citeauthoryear{Gale and Shapley}{Gale and Shapley}{1962}]{GS}
\textsc{Gale, D. and L.~Shapley} (1962): \enquote{College admissions and the
  stability of marriage,} \emph{American Mathematical Monthly}, 69, 9--15.

\bibitem[\protect\citeauthoryear{Lazarova and Dimitrov}{Lazarova and
  Dimitrov}{2017}]{Laz}
\textsc{Lazarova, E. and D.~Dimitrov} (2017): \enquote{Paths to stability in
  two-sided matching under uncertainty,} \emph{International Journal of Game
  Theory}, 46, 29--49.

\bibitem[\protect\citeauthoryear{Liu, Mailath, Postlewaite, and Samuelson}{Liu
  et~al.}{2014}]{Liu2}
\textsc{Liu, Q., G.~Mailath, A.~Postlewaite, and L.~Samuelson} (2014):
  \enquote{Stable matching with incomplete information,} \emph{Econometrica},
  82, 514--587.

\bibitem[\protect\citeauthoryear{Majumdar}{Majumdar}{2003}]{Maj1}
\textsc{Majumdar, D.} (2003): \enquote{Ordinally Bayesian incentive compatible
  stable matchings,} Mimeograph.

\bibitem[\protect\citeauthoryear{Pearce}{Pearce}{1984}]{Pearce}
\textsc{Pearce, D.} (1984): \enquote{Rationalizable strategic behavior and the
  problem of perfection,} \emph{Econometrica}, 52, 1029--1050.

\bibitem[\protect\citeauthoryear{Pourpouneh, Ramezanian, and Sen}{Pourpouneh
  et~al.}{2020}]{Sen1}
\textsc{Pourpouneh, M., R.~Ramezanian, and A.~Sen} (2020): \enquote{The
  marriage problem with interdependent preferences,} \emph{International Game
  Theory Review}, 22, 2040005.

\bibitem[\protect\citeauthoryear{Roth}{Roth}{1982}]{Roth1}
\textsc{Roth, A.} (1982): \enquote{The economics of matching: stability and
  incentives,} \emph{Mathematics of Operations Research}, 7, 617--628.

\bibitem[\protect\citeauthoryear{Roth and Sotomayor}{Roth and
  Sotomayor}{1990}]{Roth2}
\textsc{Roth, A. and M.~Sotomayor} (1990): \emph{Two-Sided Matching: A Study in
  Game-Theoretic Modeling and Analysis}, Cambridge University Press.

\bibitem[\protect\citeauthoryear{Yenmez}{Yenmez}{2013}]{Yenmez}
\textsc{Yenmez, B.} (2013): \enquote{Incentive-compatible matching mechanisms:
  consistency with various stability notions,} \emph{American Economic Journal:
  Microeconomics}, 5, 120--141.

\end{thebibliography}
\end{document}